\documentclass{article}
              
\usepackage[a4paper]{geometry}
\usepackage{amsmath,amsfonts,bm,amssymb}
\usepackage{amsthm}
\usepackage{float}
\usepackage{latexsym}
\usepackage{graphicx}
\usepackage{rotating}
\usepackage{caption}
\usepackage{epsfig}

\usepackage{subcaption}

\usepackage{color}

\usepackage{pdflscape}

\usepackage[normalem]{ulem}

  \usepackage[
    backend=biber,
    style=alphabetic,
  ]{biblatex}

\usepackage{lineno}

\newcommand{\be}{\begin{equation}}
\newcommand{\ee}{\end{equation}}
\newcommand{\ba}{\begin{eqnarray}}
\newcommand{\ea}{\end{eqnarray}}

\newtheorem{theorem}{Theorem}[section]

\theoremstyle{definition}

\theoremstyle{remark}

\usepackage{verbatim}

\newtheoremstyle{commenta}
  {6pt}
  {6pt}
  {\sffamily}
  {}
  {\sffamily \bfseries}
  {:}
  {.5em}
  {}

\theoremstyle{commenta}

\newcommand{\ev}{\mathrm{E}}
\newcommand{\var}{\mathrm{Var}}
\newcommand{\cov}{\mathrm{Cov}}

\newcommand{\beq}{\begin{equation}}     
\newcommand{\eeq}{\end{equation}}

\begin{document}

\title{Simple sign epistasis and evolutionary detours\\in fitness landscapes}
\author{
Paolo Ribeca$^{1,2}$, 
Alejandro Castro$^{3}$,
Alejandro Lage-Castellanos$^{3}$,\\
Alisa Sergeeva$^{4}$,
Sebastian Matuszewski$^{5}$,
Ruben Gustavo Paccosi$^{6}$,\\
Vitaly Belik$^{4}$,
Mahan Ghafari$^{7,10}$,
Joachim Krug$^{8}$, \\
Guillaume Achaz$^{9}$,
and Luca Ferretti$^{10}$\footnote{Email: luca.ferretti@gmail.com, luca.ferretti@ndm.ox.ac.uk}}

\date{}

\maketitle

{\footnotesize
\noindent%
(1) UK Health Security Agency, London, United Kingdom\\
(2) Biomathematics and Statistics Scotland, The James Hutton Institute, Edinburgh, United Kingdom\\
(3) Group of Complex Systems and Statistical Mechanics, Physics Faculty, University of Havana, Cuba\\
(4) System Modeling Group, Institute for Veterinary Epidemiology and Biostatistics, Freie Universitaet Berlin, Germany\\ 
(5) Accenture, Vienna, Austria\\ 
(6) National University of General Sarmiento, Buenos Aires, Argentine \\ 
(7) Department of Biology, University of Oxford, Oxford, United Kingdom\\  
(8) Institute for Biological Physics, University of Cologne, K\"oln, Germany\\  
(9) Stochastic Models for the Inference of Life Evolution (SMILE), CIRB, Collège de France, Université PSL, CNRS, INSERM, 75005 Paris, France \& Universit\'e Paris-Cit\'e, Paris, France\\ 
(10) Pandemic Sciences Institute and Big Data Institute, Nuffield Department of Medicine, University of Oxford, United Kingdom.
}




\begin{abstract}
In epistatic fitness landscapes, the fitness effect of a mutation depends on the genetic background and may even switch between deleterious and beneficial depending on the presence of another mutation. Epistatic interactions may cause both mutations to change the sign of each other’s fitness effects (reciprocal sign epistasis) or only one mutation to do so (simple sign epistasis). Both these forms of epistasis influence evolutionary trajectories. While reciprocal sign epistasis has been associated with multi-peaked landscapes and their ruggedness, the role and relative frequency of simple sign epistasis in fitness landscapes have not been systematically investigated. Here, we prove that the presence of simple sign epistasis is associated with evolutionary detours, i.e., indirect, longer fitness-increasing paths to fitness peaks that include back-mutations. We also show that in experimentally resolved, weakly epistatic landscapes, simple sign epistasis occurs much more frequently than reciprocal sign epistasis. This result is consistent with the theoretical predictions we derive for most landscape models, with the exception of the block model and of landscapes dominated by pairwise allelic incompatibilities, such as RNA stability landscapes. Our results suggest that detours represent a general feature of evolutionary trajectories in weakly epistatic landscapes.

\end{abstract}

\section{Introduction}
Fitness landscapes are a central concept of evolutionary theory, describing the relation between genotypes and their fitness \cite{Wright1932,svensson2012adaptive,srivastava2026}. These theoretical  objects provide the key conceptual framework to understand how populations evolve and how adaptive trajectories in genotype space are shaped by selection. However, the size of genotype spaces hinders the quantitative exploration of these landscapes at large scales. Several small fitness landscapes have been investigated and completely characterised in the last two-three decades through experimental evolution approaches \cite{de2014empirical}. These empirical studies, often based on microbial systems, have enabled the direct measurement of fitness values (or proxies for fitness) across all possible combinations of a limited number of mutations, offering partial insight into the topology of adaptive landscapes \cite{de2014empirical,bank2022epistasis}.
In recent years the size of such data sets has increased rapidly and we are now beginning to explore empirical fitness landscapes on functionally relevant scales \cite{Pokusaeva2019,Papkou2023,Westmann2024}.

A fundamental feature of fitness landscapes is epistasis, i.e. the dependence of the effect of mutations on their genetic background \cite{Phillips2008,Poelwijk2016,Domingo2019,bank2022epistasis}. Epistasis influences the structure of evolutionary paths, the accessibility of fitness peaks and the predictability of evolution \cite{de2014empirical,Szendro2013}. For this reason, summary statistics describing the amount, type and strength of epistasis are key for a quantitative understanding and classification of landscapes.

Epistasis in fitness landscapes can take different forms. A simple classification, based on the effect on the signs of the mutations, distinguishes between magnitude epistasis, simple sign epistasis and reciprocal sign epistasis \cite{weinreich2005perspective,Poelwijk2007,de2011causes}.
This classification considers a pair of mutations from a given genotype. If both mutations occur, and the fitness effect of any one of them does not depend on the order in which they occur, then we have magnitude epistasis (or no epistasis). In contrast, sign epistasis occurs when the fitness effect of one of these mutation changes sign depending on the presence of the other mutation. When this is true for just one of these two mutations, we denote it by \emph{simple sign epistasis} (also called single-sign epistasis in the literature). Reciprocal sign epistasis arises when both mutations switch between beneficial and deleterious depending on the presence/absence of the other mutation. Reciprocal sign epistasis has been widely studied, since its presence is related to the ruggedness and the number of local peaks in a fitness landscape \cite{Poelwijk2011,riehl2022occurrences,saona2022relation}. By contrast, the role of simple sign epistasis is not well understood, and the relative amount of simple vs reciprocal sign epistasis has not been systematically investigated, with few exceptions \cite{hwang2017genotypic,Kaznatcheev2019,srivastava2022incongruence}. This gap limits our understanding of how different forms of epistatic interactions contribute to evolutionary dynamics and constraints. In this paper, we shed some light on both these questions.

First, we describe the impact of simple sign epistasis on evolutionary trajectories. We prove that the presence of simple sign epistasis coincides with the presence of evolutionary detours in fitness-increasing trajectories, and we provide empirical evidence that the frequency of simple sign epistasis is correlated to the reduced accessibility of direct paths to peaks and to the length of evolutionary paths. This contrasts with the known role of RSE, which reduces accessibility of global optima by increasing the number of local fitness peaks.

Second, we obtain theoretical results for the fraction of motifs with simple sign and reciprocal sign epistasis in several landscape models and in empirical landscapes. We show that in all these models but two, sign epistasis tends to dominate over reciprocal sign epistasis when overall epistasis is weak, and it remains prevalent at intermediate levels of epistasis.

\section{Methods}
\subsection{Simple sign epistasis and reciprocal sign epistasis}

We consider genotypes as sequences of length $L$ with $A$ possible alleles in each position. A single mutation from a genotype is a substitution of an allele at a single position. The genotype space corresponds to the space of all possible sequences.  The fitness landscape is described by a function $f(g)$ on the genotype space, that gives the Malthusian fitness of each genotype (i.e. a real number).  For a mutation to allele $a$ at site $i$ from genotype $g$, we denote the resulting genotype as $g_{[i,a]}$. For biallelic landscapes, since there is no ambiguity on the final allele, we will use the simpler notation $g_{[i]}$. The fitness effect of a mutation from $g$ to $g_{[i,a]}$ is $\Delta_{i,a}f(g)=f(g_{[i,a]})-f(g)$.

We say that two mutations in different loci $i$ and $j$ interact epistatically if the fitness effect of the first mutation changes after the second mutation, i.e. $\Delta_{i,a}f(g)\neq \Delta_{i,a}f(g_{[j,a']})$. This is equivalent to a violation of linearity in the combination of fitnesses $f(g_{[j,a';i,a]})\neq f(g_{[i,a]})+f(g_{[j,a']})-f(g)$. There are many measures for the amount of epistasis in a landscape. The main measure that we will use in this paper is the correlation  $\gamma$ between the fitness effects of a mutation from two genotypes differing by a single mutation \cite{Ferretti2016}, defined as $\gamma=\mathrm{Cor}[\Delta_{[i,a]}f(g),\Delta_{[i,a]}f(g_{[j,a']})]$, which is 1 in the absence of epistasis and decreases for strong epistasis.

\begin{figure}[!htb]
\begin{center}
\includegraphics[width=13cm]{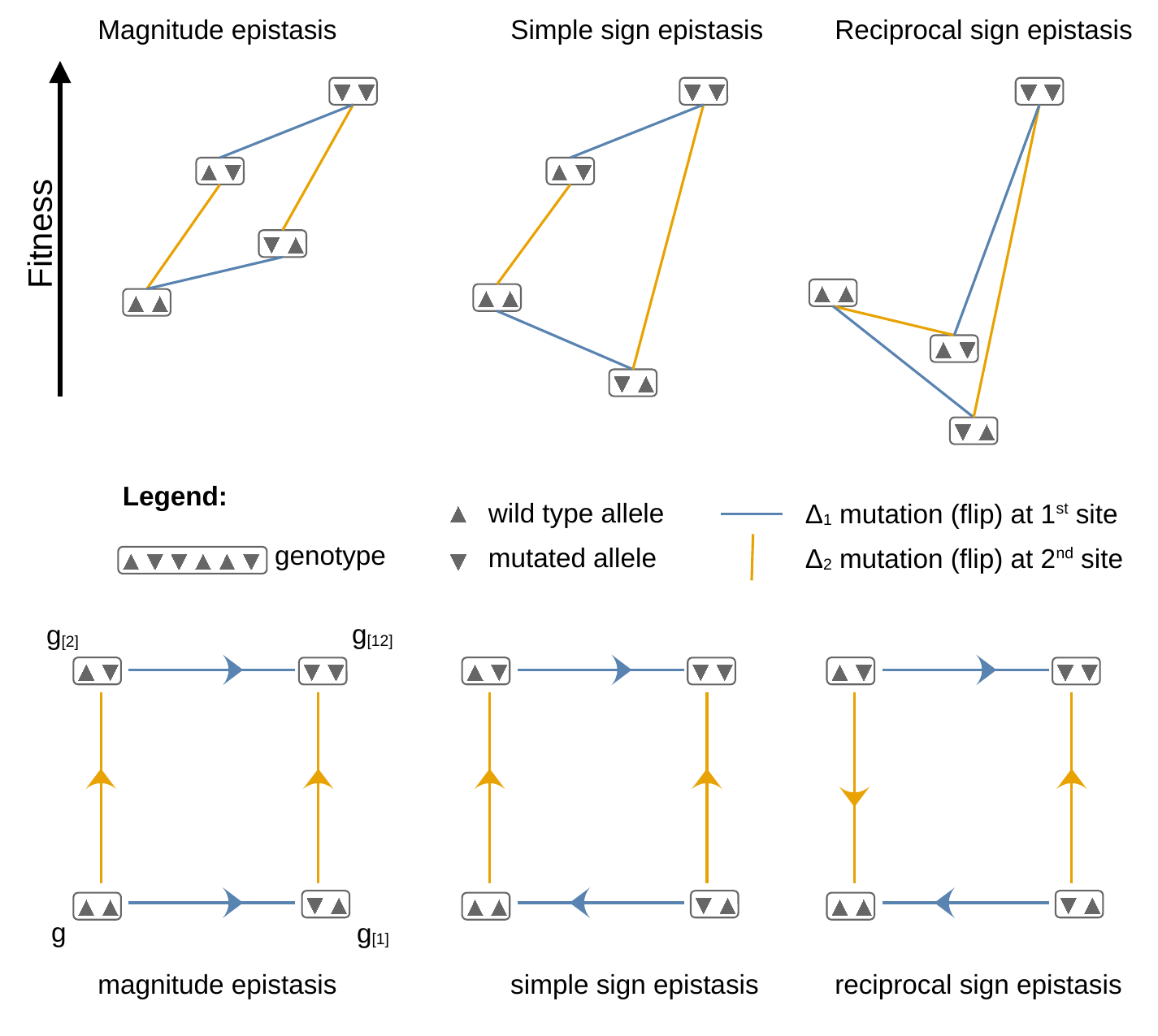}
\caption{\small \textbf{Above:} landscapes with 2 loci and 2 alleles per locus with different types of epistasis. \textbf{Below:} the corresponding fitness graphs. Arrows in the graph point the direction in of the higher fitness, not the sequence of mutations. Mutations go from genotype $g$  (bottom left) to $g_{[1,2]}$ (up right) either by applying allele flips $\Delta_1$ (blue, horizontal) and then $\Delta_2$ (yellow vertical), or in reverse order $\Delta_2$ first, then $\Delta_1$.}\label{fig:types}
\end{center}
\end{figure}

In some experimental landscapes, it could be difficult to quantify the precise fitness of a genotype, or its rank, or sometimes even the strength of fitness effects. The minimal information available is the sign of the fitness effect, i.e., if the mutation is beneficial or deleterious (or neutral). This information is enough to define a \emph{fitness graph}, i.e., an oriented graph with arrows pointing in the direction of fitness increase \cite{Crona2013}.  Examples of these graphs can be seen in Figure \ref{fig:types}.

From the fitness graph, it is possible to classify three different types of epistasis for each pair of mutations in different sites from a given genotype. Imagine to mutate one site after the other. If the signs of the effects of the mutations do not change with their order, the pair is either non-epistatic or, more generally, they show \emph{magnitude epistasis} (ME); otherwise, they have sign epistasis. If one of them changes sign but the other does not, they show \emph{simple sign epistasis} or single-sign epistasis (SSE). If both change sign, then they show \emph{reciprocal sign epistasis} (RSE). This classification is illustrated in Figure \ref{fig:types}. 

It is also possible to measure the frequency of different types of epistasis, i.e. the fraction of pairs of mutations showing magnitude, sign or reciprocal sign epistasis. We denote these fractions as $\phi_m$, $\phi_{ss}$ and $\phi_{rs}$. These measures will be the main focus of this paper. The combinations $\phi_{ss}+\phi_{rs}$ (i.e. overall sign epistasis) or $\phi_{ss}+2\phi_{rs}=1-\gamma^*$ can be used as measures of overall epistasis in the fitness graph. The combination $1-\gamma^*=\phi_{ss}+2\phi_{rs}$ has a simple interpretation: it is the probability that the effect of a random mutation changes sign after another random mutation occurs in another locus. This quantity is therefore related to the change in sign of fitness effects across different genetic backgrounds; in fact, $\gamma^*=1-\phi_{ss}-2\phi_{rs}=\mathrm{Cor}[\mathrm{sign}(\Delta_{[i,a]}f(g)),\mathrm{sign}(\Delta_{[i,a]}f(g_{[j,a']}))]$ is the equivalent of the correlation of fitness effects $\gamma$ for fitness graphs \cite{Ferretti2016}. Note that these two measures can differ significantly, since $\gamma$ accounts for the magnitude of changes in fitness effects (and therefore accounts for magnitude epistasis as well) while $\gamma^*$ considers only changes in their sign (i.e. changes in the fitness graph), irrespective of the actual magnitude of changes in fitness effects. 

In this paper, if one of the mutations in a square motif is neutral, we classify the motif as showing magnitude epistasis in order not to artificially inflate the number of epistatic interactions.

\subsection{Frequency of sign epistasis in landscape models}

The fractions of magnitude, sign and reciprocal sign epistasis $\phi_m,\phi_{ss},\phi_{rs}$ 
sum to $\phi_m+\phi_{ss}+\phi_{rs}= 1$ but the constraints on their relative frequency are unclear. In fact, some empirical landscapes contain only motifs with magnitude epistasis, and it is easy to design theoretical biallelic landscapes with reciprocal sign epistasis only (``eggbox'' landscapes \cite{Ferretti2016}). 
In this section, we compute the amount of sign and reciprocal sign epistasis $\phi_{ss},\phi_{rs}$ in several landscape models. For simplicity, we consider only models where all mutations are beneficial or deleterious, but not neutral.

\subsubsection{The House of Cards model}
This is a strongly epistatic random model of fitness. Each genotype is assigned a fitness that is independent from the others, but identically distributed. It corresponds to the most epistatic scenario for the next models (RMF with $s=0$, NK with $K=L-1$). Irrespective of the distribution, the expected fractions of SSE and RSE are
\begin{align}
\phi_{rs}&=\frac{1}{3} \\
\phi_{ss}&=\frac{1}{3}.
\end{align}

\subsubsection{Unstructured Gaussian models}
We present here a large class of models for which we can compute the amount of sign and reciprocal sign epistasis in a general way. These models correspond to the so-called isotropic Gaussian random field models in the literature \cite{stadler1999random,zhou2022higher}. They share the following features: (i) the fitnesses of all genotypes are arbitrarily correlated centered Gaussian variables; (ii) they have no structure, i.e. their distribution is invariant under arbitrary exchange of loci or alleles within a locus. 

\begin{figure}
\begin{center}
\includegraphics[width=13cm]{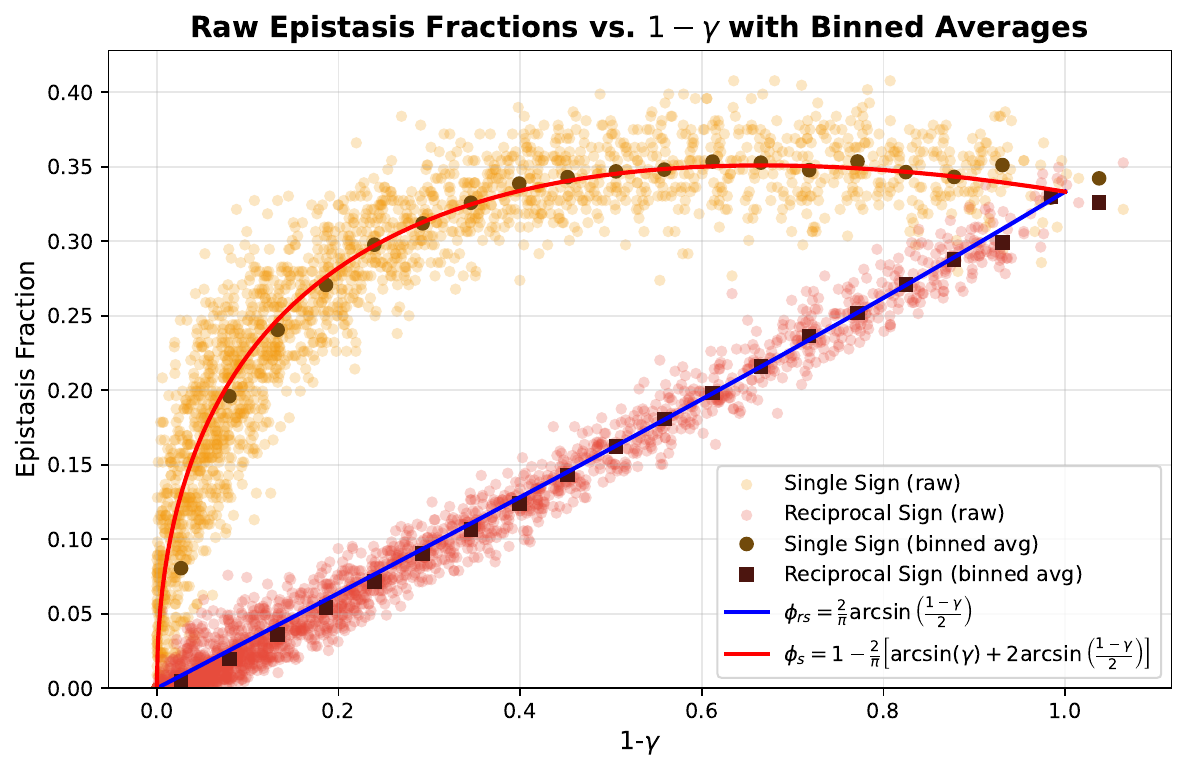}
\caption{\small Comparison between theory and simulations for the fractions $\phi_{ss},\phi_{rs}$ of SSE and RSE motifs for a variant of the RMF model that is also an unstructured Gaussian model. Light points represent 100 realizations of the biallelic model with $L=7$, using brute-force exhaustive counting of epistatic occurrences. Dark points are their averages. The solid lines correspond to the analytical predictions from unstructured Gaussian models, i.e. eqs. (\ref{relfrs}) and (\ref{relfs}), or equivalently (\ref{unstructrmf_rs}) and (\ref{unstructrmf_s}). }\label{fig:counting_vs_analytical}
\end{center}
\end{figure}

\begin{figure}
\begin{center}
\includegraphics[width=13cm]{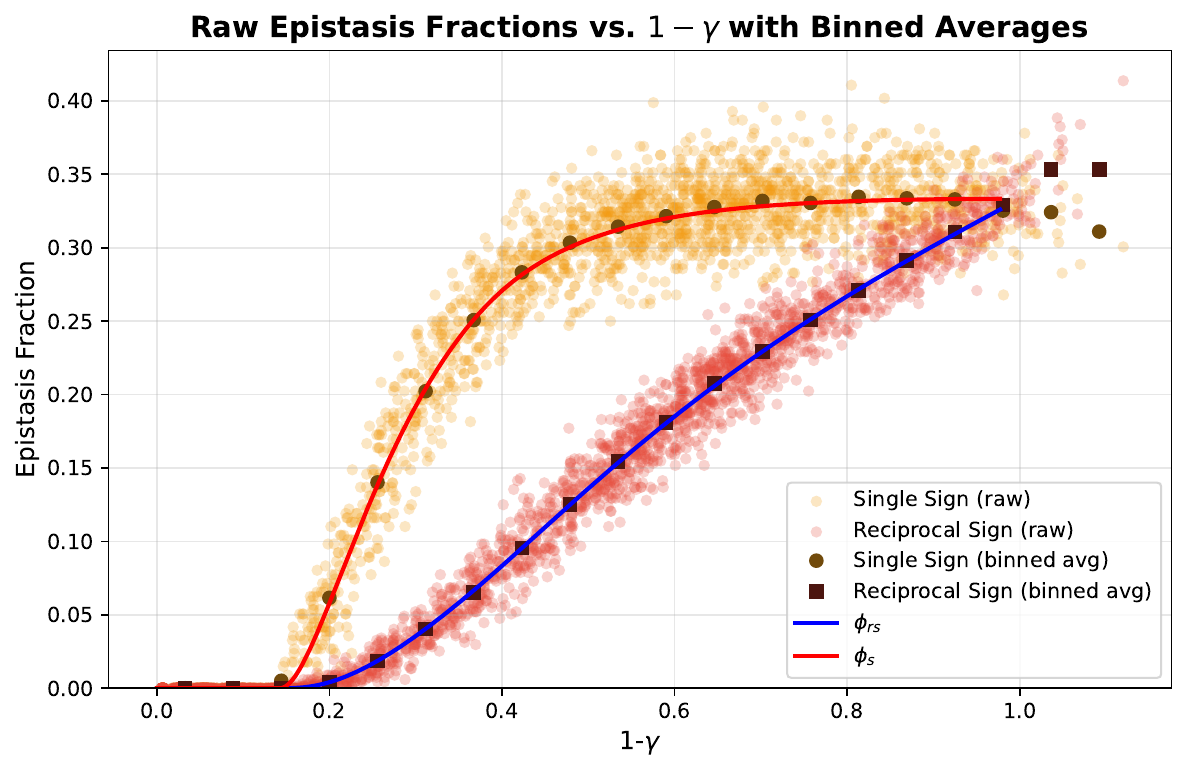}
\caption{\small Comparison between theory and simulations for the fractions $\phi_{ss},\phi_{rs}$ of SSE and RSE motifs for the classical RMF model with a uniformly distributed random component. Light points represent 100 realizations of the biallelic model with $L=7$, using brute-force exhaustive counting of epistatic occurrences. Dark points are their averages. The solid lines correspond to the analytical predictions for this model, i.e. eqs. (\ref{unifrmf_rs}) and (\ref{unifrmf_s}). }\label{fig:RMF_counting_vs_analytical}
\end{center}
\end{figure}

This class includes some of the models discussed above (Gaussian-distributed House of Cards) and below, as well as a plethora of models with an arbitrary spectrum of pairwise and higher-order epistatic interactions.

For all these models, the amount of sign and reciprocal sign epistasis depends only on the epistasis of the landscape as measured by the correlation of fitness effects $\gamma$ \cite{Ferretti2016}:
\begin{align}
\phi_{rs}&=\frac{2}{\pi}\arcsin\left(\frac{1-\gamma}{2}\right) \label{relfrs}\\
\phi_{ss}&= 1-\frac{2}{\pi}\left[\arcsin\left(\gamma\right)+2\arcsin\left(\frac{1-\gamma}{2}\right)\right].\label{relfs}
\end{align}
Derivation is shown in Supplementary Information. A comparison with simulated results for a specific unstructured model discussed in the next section is presented in Figure \ref{fig:counting_vs_analytical}.

\subsubsection{The Rough Mount Fuji model}

We consider two versions of this model. 

The classical one has a fixed additive fitness effect $s$ for each locus, plus an ``House of Cards'' contribution which adds an independent random value to the fitness of each genotype \cite{neidhart2014adaptation}. For a uniform distribution of random effects between 0 and $\sigma$, we derive the expected values (see Supplementary for derivations and Figure \ref{fig:RMF_counting_vs_analytical} for comparison with simulated results):
\begin{align}
\phi_{rs}&=\begin{cases} 0 & \mathrm{for}\ \frac{s}{\sigma}\geq {1}\\
\frac{2}{3}\left(1-\frac{s}{\sigma}\right)^3
& \mathrm{for}\ 1>\frac{s}{\sigma}>\frac{1}{2}\\ 
\frac{1}{3}-2\left(\frac{s}{\sigma}\right)^2\left(1-\frac{s}{\sigma}\right)
& \mathrm{for}\ \frac{s}{\sigma}\leq \frac{1}{2}\end{cases} \label{unifrmf_rs}\\
\phi_{ss}&=\begin{cases} 0 & \mathrm{for}\ \frac{s}{\sigma}\geq {1}\\
\left(1-\frac{s}{\sigma}\right)^2\left(-\frac{1}{3}+\frac{10}{3}\frac{s}{\sigma}-\left(\frac{s}{\sigma}\right)^2\right)
& \mathrm{for}\ 1>\frac{s}{\sigma}>\frac{1}{2}\\ 
\frac{1}{3}- \left(\frac{s}{\sigma}\right)^4
& \mathrm{for}\ \frac{s}{\sigma}\leq \frac{1}{2}.\end{cases} \label{unifrmf_s}
\end{align}

We also compute the epistasis in a second version, where both the additive contribution for each locus and the random ``House of Cards'' component are centered Gaussian distributed random variables, with variances $\sigma_s^2$ and $\sigma_{HoC}^2$ respectively. This is also an unstructured Gaussian model. The expected values for this model are
\begin{align}
\phi_{rs}&=\frac{2}{\pi}\arcsin\left(\frac{\sigma_{HoC}^2}{\sigma_{s}^2+2\sigma_{HoC}^2}\right) \label{unstructrmf_rs}\\
\phi_{ss}&=1-\frac{2}{\pi}\left[\arcsin\left(\frac{\sigma_{s}^2}{\sigma_{s}^2+2\sigma_{HoC}^2}\right)+2\arcsin\left(\frac{\sigma_{HoC}^2}{\sigma_{s}^2+2\sigma_{HoC}^2}\right)\right].\label{unstructrmf_s}
\end{align}

\subsubsection{The NK model}
The NK model describes the fitness of a genotype as a sum of independent contributions for each locus \cite{Kauffman1989}. The contribution of each locus is affected by random ``House of Cards'' interactions with $K$ other loci (here assumed to be Gaussian-distributed). Different models can be obtained by different choices of interacting loci \cite{hwang2018universality}. Although it is often assumed that this choice would have only a small effect on the model and especially on the number of local maxima \cite{weinberger1991local}, this has been proven not to be the case \cite{hwang2018universality}. In fact, we will see that the choice of interacting neighbours can have a dramatic impact on $\phi_{ss}$ and $\phi_{rs}$.

The first model (random NK) assumes a random choice of interacting loci. While exact analytical results are difficult in this case, an approximate result can be derived by a mean-field approach for the expected values \cite{hwang2018universality}. The mean-field NK model is an unstructured Gaussian model, therefore:
\begin{align}
\phi_{rs}&\approx \frac{2}{\pi}\arcsin\left(\frac{K}{2(L-1)}\right) \\
\phi_{ss}&\approx 1-\frac{2}{\pi}\left[\arcsin\left(1-\frac{K}{L-1}\right)+2\arcsin\left(\frac{K}{2(L-1)}\right)\right].
\end{align}

The second model (adjacent NK) assumes that interactions involve adjacent loci, i.e. the $K$ closest neighbours. The expected values are:
\begin{align}
\phi_{rs}&= \frac{2}{\pi (L-1)}\sum_{d=1}^{\lfloor L/2 \rfloor }\frac{2}{1+\delta_{d,L/2}} \arcsin\left(\frac{K+1-\min(d,K+1,L-K-1)}{2(K+1)}\right) \\
\phi_{ss}&=1-\frac{2}{\pi(L-1)}\sum_{d=1}^{\lfloor L/2 \rfloor } \frac{2}{1+\delta_{d,L/2}}\left[\arcsin\left(\frac{\min(d,K+1,L-K-1)}{K+1}\right)+\right.\nonumber\\
&\left.+2\arcsin\left(\frac{K+1-\min(d,K+1,L-K-1)}{2(K+1)}\right)\right].
\end{align}

\subsubsection{The block model}

This model belongs to the family of NK models \cite{perelson1995protein,Schmiegelt2014}. It assumes that the sequence is divided into approximately $L/(K+1)$ blocks, each locus interacting with all loci in the same block, resulting in a ``House of Cards'' model within each block. In this block model, the types of epistasis have the expected frequencies (irrespective of the distribution):
\begin{align}
\phi_{rs}&=\frac{K}{3(L-1)} \\
\phi_{ss}&=\frac{K}{3(L-1)}.
\end{align}

\subsection{Statistical analysis of empirical landscapes}
We analyse a collection of experimentally characterised landscapes \cite{Chou2011,Khan2011,Tan2011,Weinreich2006,daSilva2010,deVisser1997} already presented and discussed in \cite{ferretti2018evolutionary}. To obtain a more complete dataset, for each landscape we consider all its complete sublandscapes obtained by mutating only a subset of 3, 4 or 5 loci. All statistical analyses on these landscapes were performed using MAGELLAN \cite{Magellan}. These analyses do not incorporate experimental noise in fitness measurements.

\section{Results}
\subsection{Types of epistasis and their impact on evolution}
\subsubsection{Theoretical results}

Simple sign epistasis (SSE) alone does not imply strong evolutionary constraints. In fact, while RSE is related to the presence of multiple local peaks in the landscape, landscapes with SSE but no RSE have a single peak that is always accessible from any genotype, as in non-epistatic landscapes. However, SSE changes the structure of evolutionary trajectories, increasing their length - so that the peak is always reachable, but only through longer ``evolutionary detours'' involving back-mutations. 

We define a \emph{direct accessible path} between two genotypes as a fitness-increasing trajectory that points directly towards the final genotype, i.e. the Hamming distance from the final genotype decreases by 1 at every step \cite{krug2024evolutionary}. Any other fitness-increasing path should contain side-mutations (multiple mutations in the same site) or back-mutations (mutations that happen in both directions along the path) and it is an indirect path. An indirect path that involves a back-mutation is a \emph{reversing path}. Reversing accessible paths are the clearest examples of evolutionary detours. 

In the Strong Selection/Weak Mutation (SSWM) approximation,  populations can evolve only along fitness-increasing paths and the average time length of the trajectory is proportional to the number of mutational steps. Hence, indirect paths increase the number of substitutions and slow down the rate of fitness increase.

The precise role of SSE is actually more subtle.  In fact, SSE has a double effect. Without sign epistasis, non-neighbour genotypes along a fitness-increasing path are connected by $d!>1$ direct paths if they differ at $d$ loci. Compared to non-epistatic landscapes, SSE reduces this number of direct paths between genotypes:
\begin{theorem}
The presence of SSE is a necessary and sufficient condition for the existence of pairs of non-neighbour genotypes connected by exactly one direct accessible path.
\end{theorem}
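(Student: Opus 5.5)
We prove the two directions separately, working on the level of the fitness graph and using only the classification of square motifs into ME, SSE and RSE. As in the section on landscape models, we assume that no mutation is neutral, so every edge of the fitness graph is oriented. The key preliminary step is a \emph{case analysis of a single square motif}. Take genotypes $g$, $g_{[i,a]}$, $g_{[j,a']}$, $g_{[i,a;j,a']}$. We record, for each type of epistasis, how the four edges are oriented and how many direct accessible paths join opposite corners.
\begin{itemize}
\item[(i)] \emph{ME.} Parallel edges point the same way. Hence there is a unique source and a unique sink, and they are opposite corners. The source reaches the sink by exactly $2$ direct paths, and the other pair of opposite corners is joined by $0$ paths.
\item[(ii)] \emph{RSE.} Both pairs of parallel edges are antiparallel, so every corner is a source or a sink within the square. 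Any direct path between opposite corners has two steps, so its middle vertex must have one incoming and one outgoing edge, which no corner has. Hence no such path exists.
\item[(iii)] \emph{SSE.} One pair of parallel edges is antiparallel and the other is parallel. Then the four edges form a directed Hamiltonian path $x_0\to x_1\to x_2\to x_3$ around the square. The source $x_0$ and the sink $x_3$ are adjacent.
\end{itemize}

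\emph{Sufficiency.} Suppose an SSE motif exists, and use the labelling of (iii). The corners $x_0$ and $x_2$ are opposite, so they are non-neighbours at Hamming distance $2$. The route $x_0\to x_1\to x_2$ is a direct accessible path. The only other direct route passes through $x_3$. It needs the step $x_3\to x_2$, but the edge between them is oriented $x_2\to x_3$. So $x_0$ and $x_2$ are joined by exactly one direct accessible path, as required.

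\emph{Necessity.} We argue by contraposition: assume that every square motif is ME or RSE. Let $g,h$ be at distance $d\ge 2$ and joined by a direct accessible path $g=x_0\to x_1\to\dots\to x_d=h$. Because the path is direct, step $k$ sets some locus to its allele in $h$, and each locus is changed once. Take the first two steps, $x_0\to x_1\to x_2$, which change loci $i\neq j$. The genotypes $x_0,x_1,x_2$ and $y=(x_0)_{[j,h_j]}$ form a square motif. This square contains the accessible two-step path $x_0\to x_1\to x_2$ between opposite corners, which (ii) rules out for RSE. So the square is ME, and by (i) the path $x_0\to y\to x_2$ is also accessible. Splicing it in gives $g\to y\to x_2\to\dots\to h$. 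This is a second, distinct direct accessible path from $g$ to $h$. Hence the number of direct accessible paths between non-neighbours is never exactly one. By induction on adjacent transpositions one in fact recovers up to $d!$ paths, but two paths suffice here.

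The main obstacle is not the combinatorics, which reduce to the square case analysis, but stating the conventions correctly. The argument relies on all edges being oriented. If neutral mutations were allowed, squares with a neutral edge are classified as ME by the paper's convention, yet they may contain only one accessible path between opposite corners. So the statement is to be read under the no-neutrality assumption, or with strictly increasing paths and strict sign changes. For multiallelic landscapes, the squares used are exactly those spanned by the alleles taken from $h$, so the same argument applies.
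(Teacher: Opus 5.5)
Your proof is correct and follows the paper's argument. Sufficiency is read off inside the SSE motif; you use the source and its antipode, the paper uses the peak and its antipode, and both work. Necessity rests on the same exchange of two consecutive mutations along the direct path. You phrase it contrapositively (the swapped square cannot be RSE, so it is ME and yields a second path), while the paper argues directly that uniqueness forces the swapped intermediate's fitness outside $(f(g),f(g''))$, which makes the square SSE.

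Your caveat about neutral mutations is well taken, since the paper's proof also tacitly needs strict signs. Separately, your claim in (ii) that every corner of an RSE square is a source or a sink tacitly uses acyclicity of the fitness graph, i.e.\ it rules out a directed $4$-cycle; this holds automatically for a real-valued $f$, but it is worth stating.
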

\noindent This means that SSE is the minimal local motif that reduces the number of direct accessible paths, breaking the redundancy across these direct uphill paths.

At the same time, SSE generates new (longer) indirect paths containing back-mutations, as it is clear from the following:
\begin{theorem}
The presence of SSE is a necessary and sufficient condition for the presence of reversing paths in the landscape.
\end{theorem}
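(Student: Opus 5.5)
The plan is to prove the two directions separately by working in the fitness graph, using the fact (assumed throughout this section) that no mutation is neutral. Every square motif spanned by two mutations at distinct sites is then a fully oriented $4$-cycle, and its type (ME, SSE, RSE) can be read off its orientation. First I will characterise the three orientations. ME has a unique source and a unique sink at opposite corners, both sides being paths of length $2$. RSE has alternating orientation: two sources and two sinks at opposite corners, so no directed path of length $2$ lies inside the square. SSE has its unique source and unique sink at \emph{adjacent} corners, i.e.\ differing at a single site, joined by one edge and also by a directed path of length $3$ around the square.

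\emph{Sufficiency.} Suppose some square on genotypes $g, g_{[i]}, g_{[j]}, g_{[i;j]}$ (multiallelic notation analogous) shows SSE. By the characterisation, its source and sink are adjacent, say differing at site $i$. The directed path of length $3$ from source to sink goes around the square: it changes site $j$, then site $i$, then site $j$ again. For instance, if $i$ is deleterious from $g$ but beneficial from $g_{[j]}$, and $j$ is beneficial from both $g$ and $g_{[i]}$, the path is $g_{[i]}\to g\to g_{[j]}\to g_{[i;j]}$. Either way, one site is mutated in one direction and then mutated back, so this is a fitness-increasing reversing path.

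\emph{Necessity.} Assume there is no SSE and suppose a reversing path exists. Among all fitness-increasing segments in which some site $i$ is changed away from an allele and later restored, I will pick one with the fewest steps: $x\to x_{[i]}\to\cdots\to y_{[i]}\to y$. I expect to use minimality to ensure that the intermediate steps act on sites other than $i$; side-mutations at $i$ would yield a shorter segment of the same kind. If there are no intermediate steps, then $x\to x_{[i]}\to x$ contradicts strict fitness increase. Otherwise, let the last intermediate step be $z_{[i]}\to y_{[i]}$, which changes site $j\neq i$, and consider the square $z, z_{[i]}, y_{[i]}, y$. It contains the directed path $z_{[i]}\to y_{[i]}\to y$ of length $2$, so it is not RSE. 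Since SSE is excluded, it must be ME. Then mutation $i$ has the same sign from $y_{[i]}$ and from $z_{[i]}$, giving $z_{[i]}\to z$, and mutation $j$ has the same sign from $z_{[i]}$ and from $z$, giving $z\to y$. Replacing $z_{[i]}\to y_{[i]}\to y$ by $z_{[i]}\to z\to y$ yields a fitness-increasing path of the same length whose reversal of site $i$ occurs one step earlier. Iterating, or equivalently inducting on the number of steps between the two mutations at $i$, produces the forbidden $x\to x_{[i]}\to x$.

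The main obstacle is the bookkeeping in the necessity direction. The minimality argument has to exclude side-mutations at site $i$ and handle repeated mutations at $j$ in the multiallelic case. The induction also has to be set up so that the rerouted path stays fitness-increasing and its first mutation at $i$ is unchanged; this works because only the tail of the segment is modified. I would first do the biallelic case cleanly and then note that the square argument is unchanged for arbitrary alleles.
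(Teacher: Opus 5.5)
Your argument is correct. The sufficiency direction is the paper's: the length-three path around the SSE square. For necessity you take a different route.

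\textbf{The paper's route.} It fixes the endpoints $g_s,g_e$ of the reversing path. Among reversing uphill paths between them, it picks one minimising the summed Hamming distance to the set $\mathcal{D}(g_s,g_e)$ of genotypes lying on direct paths. It then swaps the two mutations around the last genotype of maximal distance. The swapped path is closer, hence not uphill, hence that square is SSE.

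\textbf{Your route.} You argue by contradiction with a local potential: the length of a shortest uphill segment on which one site leaves an allele and later returns to it. Without SSE, the square spanned by the restoring mutation and the preceding step contains a directed path of length two, so it is ME. The two mutations therefore commute, and the loop shrinks until $x\to x_{[i]}\to x$.

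\textbf{Comparison.} Both proofs rest on the same exchange lemma: two consecutive uphill mutations at distinct sites span a square that is either ME (and then they commute) or SSE, never RSE. What differs is the quantity being decreased, and yours is the more robust choice. The paper's potential can vanish identically. On $000\to100\to110\to010\to011\to111$, where site 1 is flipped three times, every genotype lies on a direct path from $000$ to $111$. The ``last farthest genotype'' then supplies no distance-reducing step. Your loop length is anchored to a single back-mutation and always strictly decreases, so it has no degenerate case. The paper's formulation does make the detour picture explicit: the SSE motif sits where the path strays farthest from the direct-path region. Read constructively, your argument also locates the motif, as the first square at which the restoring mutation fails to commute backwards.

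\textbf{Two minor remarks.}

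First, for $A\geq 3$ minimality alone does not exclude side-mutations at site $i$. A loop $a\to b\to c\to a$ at site $i$, interleaved with steps at other sites, contains no shorter loop. The fix is easy: if the step before the restoring mutation is also at site $i$, the three genotypes involved differ only at $i$ and are pairwise adjacent. The two steps therefore merge into a single uphill step, and the loop again shortens.

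Second, your explicit no-neutrality assumption is genuinely needed, and the paper relies on it only tacitly. Under the paper's convention that a square containing a neutral mutation counts as ME, a three-locus landscape can have a reversing path and no sign epistasis at all. For example, take fitnesses $0,1,2,3,4$ along $000\to100\to110\to111\to011$ and fitness $2$ at $010$, $001$ and $101$. Every face of the cube is then ME, either outright or because it contains a neutral site-1 edge.
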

\noindent This means that SSE is exactly the local signature that allows detours with back-mutations.

Looking at the fitness graph for two mutations with SSE interactions (bottom of Figure \ref{fig:types}), it is easy to understand the idea behind these statements. In fact, in this simple case, the genotype farther from the peak has only one direct accessible path to the peak, instead of two as in the non-epistatic case. However, another genotype gained a reversing accessible path of length three to the peak, that would not exist in non-epistatic landscapes.

On the other hand, reciprocal sign epistasis (RSE) is well recognised as key feature of fitness landscapes. Its importance stems from a celebrated theorem \cite{Poelwijk2011} that connects it to multiple peaks in the landscape: 
\begin{theorem}[Poelwijk 2011; Saona et al. 2022; Riehl et al. 2022] 
Multiple fitness peaks exist only in the presence of RSE. More precisely, in a landscape with $K \geq 2$ peaks there are at least $K-1$ RSE motifs.
\end{theorem}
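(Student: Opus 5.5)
This is the result of Poelwijk et al. (2011), sharpened by Saona et al. and Riehl et al. The plan is to prove the two parts separately. The first part ("multiple peaks only in the presence of RSE") follows from a contrapositive argument on the fitness graph. The counting statement ($K$ peaks force at least $K-1$ RSE motifs) is then obtained by a graph-theoretic refinement that assigns distinct RSE motifs to $K-1$ of the peaks.

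\emph{Step 1 (two peaks imply one RSE motif).} Take two distinct peaks $p,q$. Among all pairs of distinct local peaks, choose one minimising the Hamming distance $d=d(p,q)$. We have $d\ge 2$, since neighbouring genotypes cannot both be peaks (the edge between them points one way). Consider the subcube of genotypes that agree with $p$ and $q$ outside the $d$ loci where they differ, and look at the minimum-fitness genotype $m$ on a shortest path between them. A more robust route uses the subcube directly. Restrict the fitness graph to this subcube, where $p$ and $q$ are still local maxima. Then apply induction on $d$. For $d=2$, the square $p, p_{[i]}, q, p_{[j]}$ has both $p$ and $q$ as sinks, and both mutations change sign, so this is exactly an RSE motif. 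For $d>2$, start from the lowest-fitness genotype of the subcube and follow uphill paths inside it. The basins of $p$ and $q$ within the subcube cannot cover it with only one sink. Either a new peak appears in the subcube closer to one of them, contradicting minimality, or both basins are adjacent. Adjacent basins force a square motif whose two "upper" corners flow to different sinks. Checking the orientations of that square then produces an RSE.

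\emph{Step 2 (counting).} Form an auxiliary graph $H$ whose vertices are the $K$ peaks. Put an edge between two peaks whenever the argument of Step 1, applied to that pair (or to a pair of adjacent basins under a fixed deterministic tie-breaking of uphill moves, e.g.\ steepest ascent), yields an RSE motif. Steepest-ascent basins partition the hypercube, and the hypercube is connected, so the basin adjacency graph on the $K$ peaks is connected. It therefore contains a spanning tree with $K-1$ edges. The task is to show that each tree edge can be charged to a distinct RSE motif. To do this, associate to each pair of adjacent basins a square motif straddling their boundary whose two top corners lie in different basins, and show this square is RSE. If the motif is SSE or ME instead, the square has a unique sink, and both top corners would flow into it and hence into the same basin.

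\emph{Main obstacle.} The hard step is the injectivity in Step 2: one RSE square could a priori serve as the witness for several basin boundaries. A square with a single sink cannot witness two basins at all, and an RSE square has exactly two sinks. So I will try to show that each RSE square can separate at most one pair of basins, namely the basins of its two sinks. This makes the map from tree edges to RSE motifs injective, which gives at least $K-1$ RSE motifs. Making the "boundary square" construction rigorous is the delicate point, in particular ensuring that its corners' flows are governed by the square's own orientation. I expect to need the Saona et al.\ approach here: a minimal-fitness saddle argument, i.e.\ taking, for each pair of basins, the highest-fitness genotype from which both can be reached.
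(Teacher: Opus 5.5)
\paragraph{The gap: the basin step is false.} Your $d=2$ case is correct, and so is the connectivity of the basin-adjacency graph. The step that carries everything else is the claim that a square straddling two basins must be RSE because ``a single-sink square sends both top corners to its sink''. This is false. A corner's uphill flow is decided by its whole neighbourhood and can leave the square.

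Take $L=3$ with $f(110)=10.5$, $f(101)=10$, $f(011)=9.9$, $f(010)=5.1$, $f(100)=5$, $f(000)=3$, $f(111)=2$, $f(001)=1$. The peaks are $110$, $101$, $011$. Steepest ascent sends $000\to010\to110$ and $100\to110$, but $001\to101$. Now consider the square $\{001,000,100,101\}$. Locus 1 is beneficial in both backgrounds, while locus 3, mutated from 1 to 0, switches from $+2$ to $-5$. So the square is SSE with unique sink $101$. Yet it straddles the boundary between the basins of $110$ and $101$ under either reading of ``top corners'': the two fitter neighbours $000,101$ of its minimum, or its two fittest corners $101,100$.

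So basin adjacency gives no RSE witness. Neither the $d>2$ case of Step 1 nor the charging in Step 2 is established. The injectivity you flag is not the real obstacle; existence of an RSE witness is.

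Your fallback fails on the same landscape. $000$ is the fittest genotype with accessible paths to both $101$ and $011$; the only other such genotypes are $001$ and $111$. But its two fitter neighbours $100$ and $010$ complete an ME square with $110$ on top.

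There is also a separate problem in Step 1. A new local maximum of the subcube need not be a peak of the whole landscape, so it does not contradict your minimal choice of $p,q$. You would have to induct over sub-landscapes instead.

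\paragraph{The missing idea: saddles of superlevel sets.} Use saddles of superlevel sets (paths maximising their minimum fitness) rather than uphill basins.

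Assume no neutral mutations, as in the paper's models. Add genotypes in decreasing order of fitness, breaking ties arbitrarily, and track the connected components of the added set in the genotype graph.
\begin{itemize}
\item A peak has no fitter neighbour, so it opens a new component.
\item Any other genotype $m$ has fitter neighbours in some $k_m\ge 1$ existing components and merges them.
\item Since genotype space is connected, $\sum_m (k_m-1)=K-1$.
\end{itemize}

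Now suppose fitter neighbours $u,w$ of $m$ lie in different components. They must mutate different loci, because two alleles at one locus are neighbours of each other. The fourth corner $x$ of the square spanned by $m,u,w$ must come after $m$ in the order; otherwise $u,x,w$ would already lie in one component. Hence $f(x)\le f(m)<f(u),f(w)$. So $u$ and $w$ are both sinks, the square is RSE, and $m$ is the earlier of its two sources.

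Choosing one representative per component gives $k_m-1$ distinct RSE squares at $m$. Each such square determines its $m$, so all $K-1$ squares are distinct and injectivity comes for free.

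For $K=2$ this is essentially the classical argument: swap the two mutations around the lowest genotype of a path between the peaks whose minimum fitness is largest. The SI's corollary on direct paths uses the same kind of swap. The paper itself gives no proof of this theorem, only citations. Saona et al.\ obtain the $K-1$ bound via discrete Morse theory, of which this component-counting argument is the elementary core.
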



In general terms, RSE is related to evolutionary constraints in reaching a peak of the landscape, as can be appreciated by this more general corollary: 
\begin{theorem} If there is at least one genotype that has no direct fitness-increasing path to a peak in the landscape, then there is RSE.
\end{theorem}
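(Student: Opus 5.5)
We prove the contrapositive: if the landscape has no RSE motif, then every genotype has a direct fitness-increasing path to some peak. The plan is to restrict to the subcube spanned by the genotype and a suitable peak, and apply the preceding theorem (Poelwijk 2011; Saona et al. 2022; Riehl et al. 2022) there.

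Fix a genotype $g$ and consider any fitness-increasing path from $g$. Such a path exists and terminates at a local peak $p$, since fitness strictly increases and the genotype space is finite; the trivial path applies if $g$ is itself a peak. The path may be indirect, so we want to replace it by a direct one. Let $D$ be the set of loci where $g$ and $p$ differ, and let $S$ be the sub-landscape of genotypes that agree with $g$ outside $D$ and carry, at each locus in $D$, either the allele of $g$ or that of $p$. This $S$ is a biallelic hypercube of dimension $|D|$. Every square motif in $S$ is a square motif of the full landscape, so $S$ contains no RSE. By the preceding theorem applied to $S$, $S$ has exactly one local peak, in the sense of peaks within $S$. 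Moreover $p$ is a peak of $S$, because its neighbours in $S$ are neighbours in the full landscape. Hence $p$ is the unique peak of $S$.

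Inside the finite hypercube $S$, start at $g$ and repeatedly take any fitness-increasing single mutation within $S$. This walk cannot cycle, since fitness increases at each step, so it ends at a peak of $S$, which must be $p$. Every step inside $S$ flips a locus in $D$. We need each step to decrease the Hamming distance to $p$, i.e.\ to move from the $g$-allele to the $p$-allele. This is the main obstacle, because an uphill move inside $S$ could also be a back-mutation toward $g$.

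To handle it, I would use a stronger known fact about RSE-free biallelic cubes: every genotype has a \emph{direct} accessible path to the unique peak. Equivalently, from any $h\neq p$ in $S$ there is an uphill mutation toward $p$. I would prove this by induction on the dimension. Suppose $h$ has no uphill mutation toward $p$. Consider the subcube spanned by $h$ and $p$; within it, every mutation from $h$ is toward $p$, so $h$ is a local peak of that subcube. The subcube is also RSE-free and contains the peak $p$; if $p$ is a peak there too, we get two peaks, contradicting the preceding theorem. If instead $p$ is not a peak of the subcube, then applying the induction hypothesis to smaller faces containing $p$ yields the contradiction. Once the obstacle is resolved, choosing at each step an uphill mutation toward $p$ gives the required direct path from $g$ to the peak $p$, which proves the contrapositive.
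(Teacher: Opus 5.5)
Your argument is correct, but it follows a different route from the paper's.

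The paper's proof is self-contained and constructive. It takes a direct path from the peak $p$ to the stuck genotype, chosen greedily so that each next genotype has maximal fitness. The first step from $p$ is deleterious, and by hypothesis some later step along this path must be beneficial. The first beneficial step $g'\to g''$ and the preceding deleterious step $g\to g'$ then span a square whose fourth vertex $g'''$ satisfies $f(g''')\le f(g')$ by the greedy choice. Hence $g$ and $g''$ both lie above $g'$ and $g'''$, which is an explicit RSE motif.

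You instead reduce the statement to the preceding multi-peak theorem: if some $h\neq p$ had no uphill mutation toward $p$, then $h$ and $p$ would both be local peaks of the RSE-free biallelic subcube they span, contradicting the cited result.

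What each route buys:
\begin{itemize}
\item Your route is shorter given the cited theorem. It also yields a slightly stronger local fact: every $h\neq p$ has an uphill move toward $p$, so any sequence of such moves reaches $p$.
\item The paper's route needs no external result and pinpoints the RSE motif along a specific path. It also implies the existence part of the multi-peak theorem, since a second peak has no direct uphill path to the first. Together, the two arguments show the two statements are essentially equivalent.
\end{itemize}

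Your write-up can also be trimmed. The case ``$p$ is not a peak of the subcube'' never occurs, because a peak of $S$ is a peak of every subcube containing it. That is the same observation you used to show $p$ is a peak of $S$, so the induction on dimension is unnecessary. The preliminary uphill walk and the cube $S$ are likewise dispensable: you may take any peak $p$ of the full landscape directly. Doing so also covers the reading of the statement in which the hypothesis refers to a specific peak, though in an RSE-free landscape there is only one peak anyway.
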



Hence, RSE can disrupt direct paths between genotypes, making it impossible to move among different regions of the landscape through a sequence of beneficial mutations. This can be intuitively understood also from the fitness graph of two mutations interacting via RSE (bottom of Figure \ref{fig:types}). In this simple case, each of the two peaks has no direct fitness-increasing path - and actually no path at all - to the other peak.

More generally, these properties suggests that in landscapes with ME and SSE only, the minimum number of mutations to reach the peak is the same as in non-epistatic landscapes, since at least one direct path to the peak must be accessible; however, the expected time to reach the peak is longer due to the possibility of evolutionary detours \cite{Kaznatcheev2019}.

\subsubsection{Role of epistasis in experimental landscapes}

To confirm the general validity of the theoretical insights above, we explore the empirical relationship between SSE, RSE and accessible paths in experimentally characterised landscapes, and more specifically in a collection of such landscapes \cite{Chou2011,Khan2011,Tan2011,Weinreich2006,daSilva2010,deVisser1997} and their sublandscapes. All these landscapes are biallelic, therefore all indirect fitness-increasing paths in these landscapes are reversing paths. In Table \ref{table_accessibility}, we build a linear model based on $\phi_{ss}$ and $\phi_{rs}$ for each of the measures of accessibility listed in the table, and we report the inferred coefficients for both statistics. 

Both SSE and RSE reduce the number of accessible direct paths as expected, with a more marked effect for RSE. RSE also reduces all other statistics of accessibility. Instead, as expected, SSE has a positive effect on the number of indirect accessible paths (i.e. all accessible paths minus the direct ones), on their length, on as well as the overall number and length of paths per peak. Note that the increase in indirect paths due to SSE does not compensate for the reduction in direct paths. Hence, while not necessarily precluding access to the highest peak, SSE acts anyway as a constraint on evolutionary trajectories. 

Notably, SSE increases the predictability of evolution by reducing the number of direct accessible paths while still leaving some accessible paths, which are therefore more likely to be taken.

\begin{table}
\begin{center}{\small
\begin{tabular}{|c|c|c|c|c|} \hline
Landscapes & \multicolumn{2}{c|}{ 3 loci } & \multicolumn{2}{c|}{ 4 loci } \\ \hline
Linear coefficients of: & SSE ($\phi_{ss}$) & RSE ($\phi_{rs}$) & SSE ($\phi_{ss}$) & RSE ($\phi_{rs}$)  \\  \hline
\# of direct accessible paths to global optimum        &$-3.3^{***}$&$-6.7^{***}$&$-16.4^{**}$&$-26.8^{***}$\\
\# of accessible paths to global optimum       &$-2.6^{***}$&$-7.2^{***}$&$-11.4^{}$&$-36.2^{***}$\\
Mean length of accessible paths         &$0.4^{***}$&$-0.3^{***}$&$1.2^{***}$&$-1.8^{***}$\\ 
\# of paths per peak &$1.6^{**}$&$-2.4^{***}$&$13.0^{}$&$-40.4^{***}$\\
Total length of paths per peak  &$1.3^{***}$&$-2.6^{***}$&$6.7^{*}$&$-17.2^{***}$\\ \hline
\end{tabular}
\caption{Empirical contributions of the fraction of SSE and RSE to several measures of accessibility from fitness-increasing paths, defined in terms of coefficients of a linear model estimated across empirical biallelic landscapes with 3 and 4 loci. All coefficients are unnormalised, but the coefficients for SSE and RSE can be directly compared. Both RSE and (to a lesser extent) SSE contribute negatively to statistics related to the number of accessible paths to the highest fitness peaks; however, SSE contributes positively to statistics related to the length of accessible paths, while RSE contributes negatively. (Significance: $^*$ $p<0.05$; $^{**}$ $p<0.01$; $^{***}$ $p<0.001$.)}\label{table_accessibility}
}\end{center}
\end{table}

\subsection{Quantifying sign epistasis in experimental and model landscapes}

The amount of sign epistasis depends on the structure and nature of the landscape. In fact, there are epistatic landscapes with little sign epistasis. For example, a non-linear fitness dependence on weakly epistatic phenotypes 
(as in global epistasis \cite{diaz2023global}) could generate a significant amount of epistasis without any sign epistasis. However, it is equally or more interesting to look at the relative contributions of SSE and RSE. 

First, we look at the fraction of RSE and SSE in the collection of experimentally resolved sublandscapes of different size described above. The relative contribution of RSE to sign epistasis is small, especially for weak levels of epistasis, as shown in Figure \ref{figs_se_exp}(a). Even at strong levels of epistasis, its contribution tends to be at most comparable to SSE. The same trend is present across different landscape sizes, as shown in Figures \ref{figs_se_exp}(b,c,d). 

Notably, the average amounts of SSE and RSE in these empirical landscapes are well predicted by the expected SSE and RSE for unstructured models. This is the case for landscape of any size $L=3,4,5$ as illustrated in Figures \ref{figs_se_exp}(b,c,d).

\begin{figure}
    \centering
    \begin{subfigure}[b]{0.47\textwidth}
        \caption{RSE versus sign epistasis}
        \vspace{-10pt}
	\includegraphics[width=\textwidth]{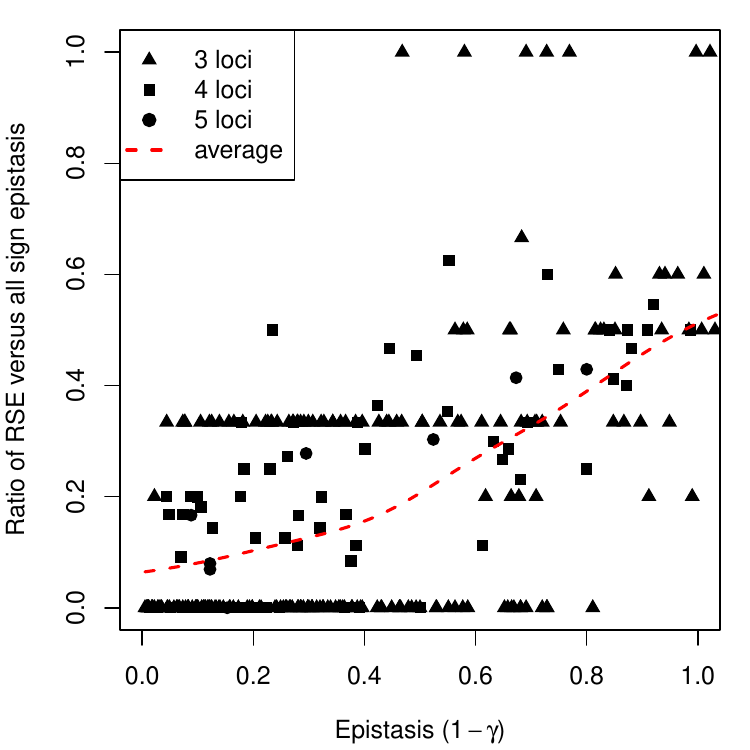}
        \label{fig:exp}
    \end{subfigure}
    ~ 
    \begin{subfigure}[b]{0.47\textwidth}
        \caption{SSE and RSE for 3 loci}
        \vspace{-10pt}
	\includegraphics[width=\textwidth]{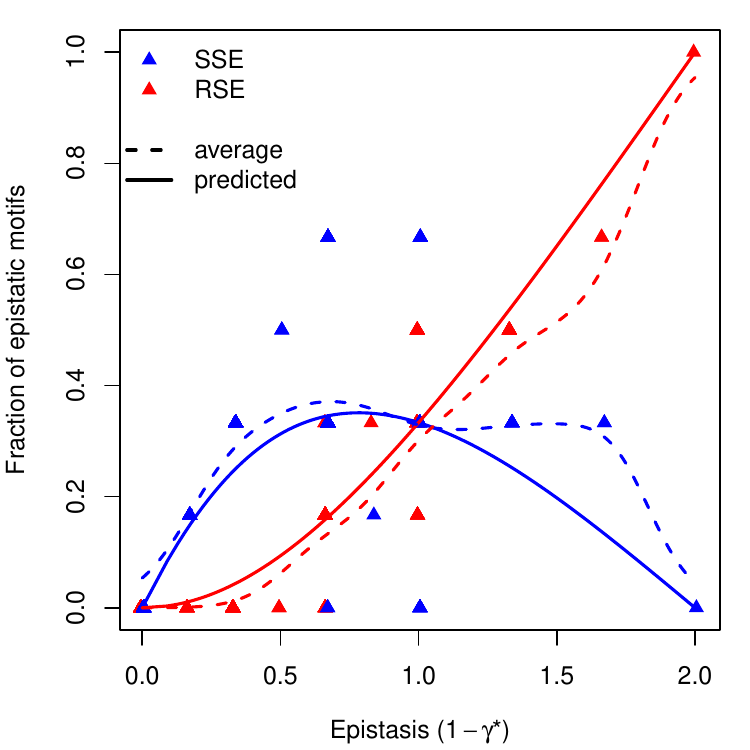}
        \label{fig:fit3loci}
    \end{subfigure} \\
    \begin{subfigure}[b]{0.47\textwidth}
        \caption{SSE and RSE for 4 loci}
        \vspace{-10pt}
	\includegraphics[width=\textwidth]{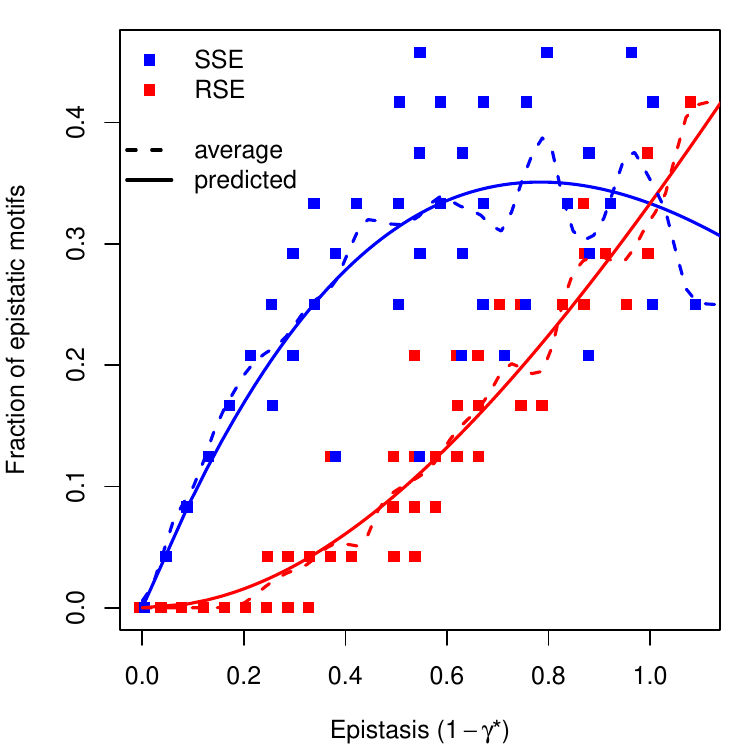}
        \label{fig:fit4loci}
    \end{subfigure}
    ~ 
    \begin{subfigure}[b]{0.47\textwidth}
        \caption{SSE and RSE for 5 loci}
        \vspace{-10pt}
	\includegraphics[width=\textwidth]{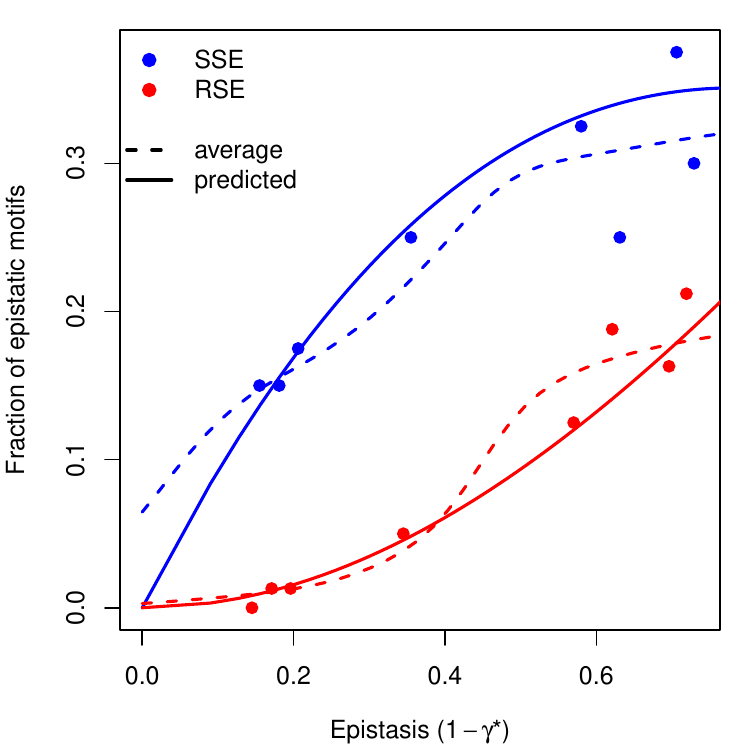}
        \label{fig:fit5loci}
    \end{subfigure}
    \caption{\small \textbf{(a)} Empirical fraction of RSE motifs versus all motifs involving sign epistasis, i.e. $\frac{\phi_{rs}}{\phi_{ss}+\phi_{rs}}$, computed for empirical (sub)landscapes of different size $L=3,4,5$ as a function of the estimated epistasis $1-\gamma$; the moving average of $\frac{\phi_{rs}}{\phi_{ss}+\phi_{rs}}$ is shown by the dashed line.
    \textbf{(b)} Comparison of SSE and RSE for empirical landscapes with theoretical predictions for unstructured Gaussian landscapes. The points represent the fractions $\phi_{ss},\phi_{rs}$ of motifs with SSE and RSE as a function of fitness graph epistasis $1-\gamma^*$ in empirical biallelic (sub)landscapes with 3 loci, with the moving averages represented by dashed lines; the continuous lines indicate the expected values for landscapes with unstructured Gaussian interactions.
    \textbf{(c)} Comparison of SSE and RSE for empirical 4-loci landscapes with theoretical predictions for unstructured Gaussian landscapes.
    \textbf{(d)} Comparison of SSE and RSE for empirical 5-loci landscapes with theoretical predictions for unstructured Gaussian landscapes.
    }\label{figs_se_exp}
\end{figure}


The expected amount of SSE and RSE, as well as the total amount of sign epistasis, is illustrated as a function of the local epistasis $1-\gamma$ in Figure \ref{fig_models_gamma} for a variety of landscape models. All unstructured models, including Rough Mt Fuji and NK variants, show the exact same behaviour. The NK model with interactions between adjacent loci and the RMF model with uniform distribution show slightly different behaviour; however, also these models show a monotonic relation between epistasis and sign epistasis, as well as a prevalence of SSE over RSE for all reasonable values of epistasis.
The differences between different models become less pronounced when exploring the amount of SSE and RSE versus the fitness graph epistasis $1-\gamma^*$ in 
Figure \ref{fig_models}. This Figure measures the relative contribution of each type of epistatic motifs, rather than the absolute one.

The exception to the overall trend is provided by the block model. This model has equal amount of SSE and RSE for any amount of epistasis, and it is therefore a clear outlier in Figure \ref{fig_models}. However, note that the amount of RSE is not far from other models with comparable epistasis, and what differentiates it is the much lower amount of sign epistasis (Figure \ref{fig_models_gamma}).

\begin{figure}
\begin{center}
\includegraphics[width=0.8\textwidth]{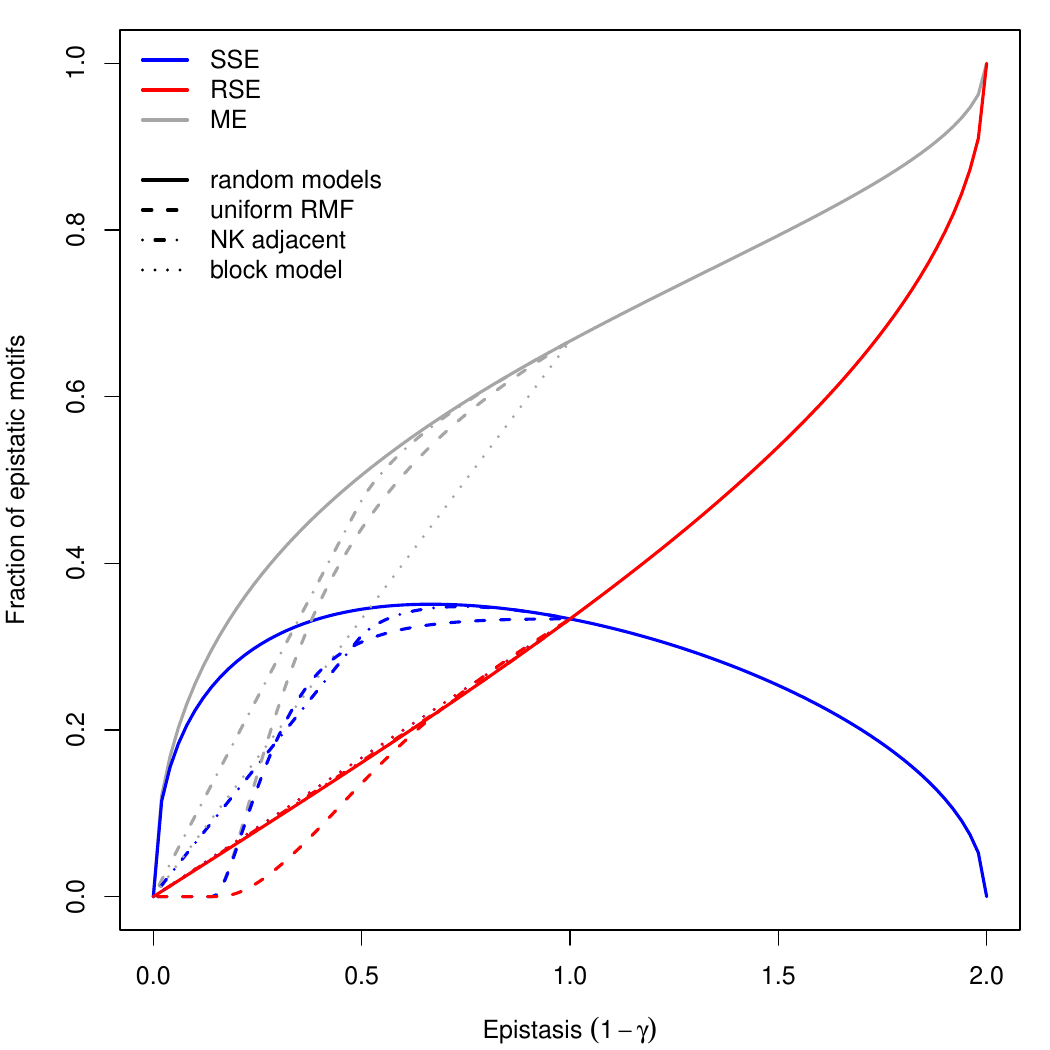}
\caption{\small Theoretical predictions for the fractions $\phi_{ss}$, $\phi_{rs}$ as well as the overall sign epistasis $\phi_{ss}+\phi_{rs}$ versus local epistasis $1-\gamma$ for the models discussed in this paper. The continuous lines represent the predictions from unstructured random (Gaussian) models, i.e. the same lines as in Figure \ref{figs_se_exp}.}\label{fig_models_gamma}
\end{center}\end{figure}
\begin{figure}
\begin{center}
\includegraphics[width=0.8\textwidth]{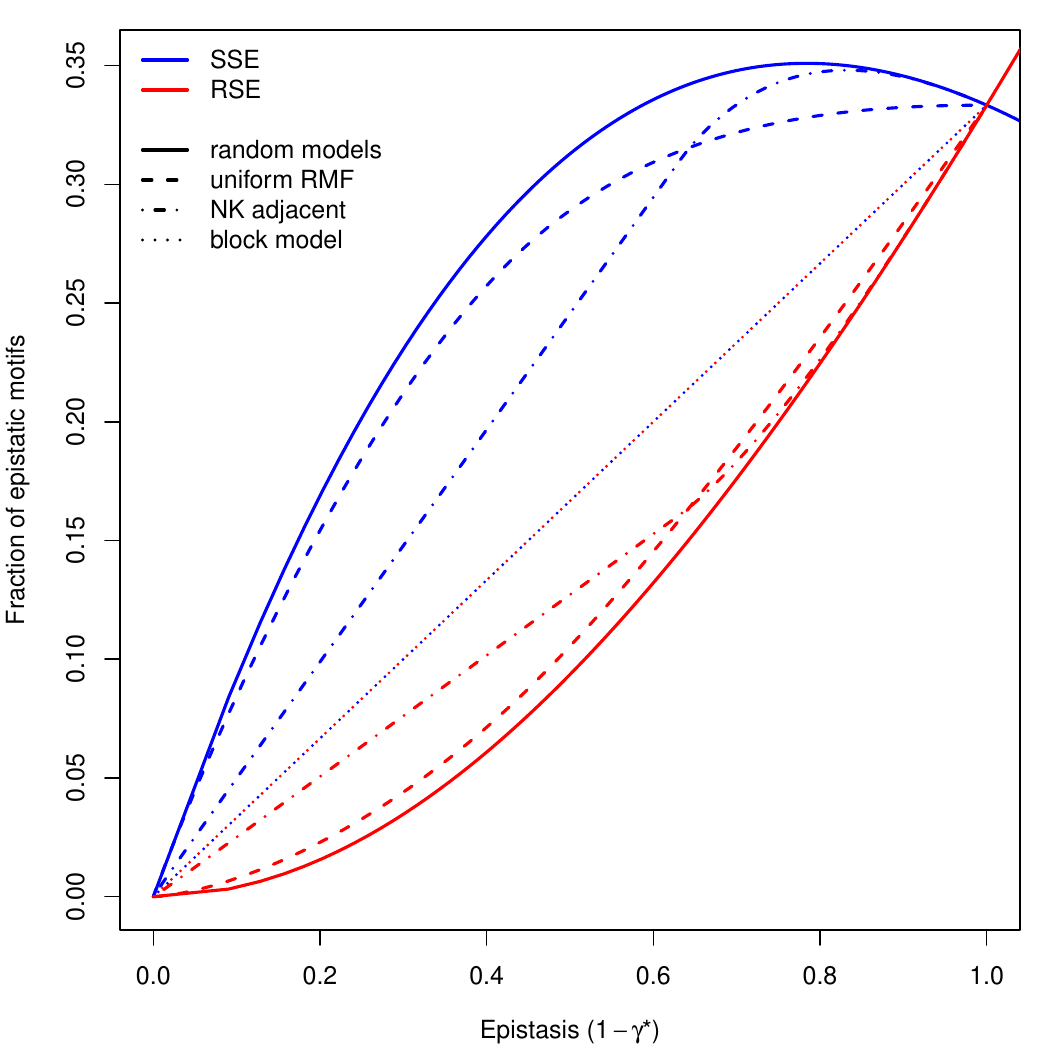}
\caption{\small Theoretical predictions for the fractions $\phi_{ss}$, $\phi_{rs}$ versus ``fitness graph'' epistasis $1-\gamma^*$ for the models discussed in this paper.}\label{fig_models}
\end{center}\end{figure}

\subsubsection{Allelic Incompatibilities models}
None of the previous models showed a prevalence of RSE over SSE for weakly epistatic landscapes. This is a feature of most existing landscape models. However, it is not difficult to think of a model dominated by reciprocal sign epistasis, by introducing strong pairwise compensatory interactions. We discuss the simplest models in this class, based on the biological intuition of allelic incompatibilities. 

In this Allelic Incompatibilities model, each allele has an additive fitness contribution. Beyond these additive contributions, for some pairs of loci, ``compatible'' and ``incompatible'' pairs of alleles are chosen. Each pair of ``incompatible'' alleles present in a genotype contributes negatively to its fitness.

Formally, if the allele at the $i$th locus in the landscape is denoted as $S_i=\pm 1$, the biallelic version of the model can be defined by the fitness function
\beq
f(g)=\frac{1}{2}\sum_{i=1}^Ls_iS_i+\sum_{i=1}^L\sum_{j=i+1}^LJ_{ij}S_iS_j
\eeq
where the $s_i$s describe the additive fitness effects, and the coefficients $J_{ij}$ describe the structure of epistatic interactions in the model. Such models are actually well known from the physics of magnetic systems and spin glasses - the Ising model and its generalisations like the Sherrington-Kirkpatrick model \cite{stein1992}.

If the $s_i$s and $J_{ij}$s are distributed as centered Gaussian variables with variances $\sigma_s^2$ and  $\sigma_J^2$ respectively, and if the interactions are dense enough - i.e. every locus interacts with a large fraction of the others - and random, then we can approximate the result with the one for the corresponding mean-field model, i.e. the Sherrington-Kirkpatrick model which is an unstructured model, obtaining
\begin{align}
\phi_{rs}&=\frac{2}{\pi}\arcsin\left(\frac{4\sigma_J^2}{\sigma_s^2+4(L-1)\sigma_J^2}\right) \\
\phi_{ss}&=1-\frac{2}{\pi}\left[\arcsin\left(1-\frac{8\sigma_J^2}{\sigma_s^2+4(L-1)\sigma_J^2}\right)+2\arcsin\left(\frac{4\sigma_J^2}{\sigma_s^2+4(L-1)\sigma_J^2}\right)\right]
\end{align}
so we have the same relations (\ref{relfrs}), (\ref{relfs}) and sign epistasis dominates. 

The interesting case is the one with sparse, structured interactions. Assume that each site interacts with $I$ other sites, with sparse interactions ($I\ll L$). For the case when each site interacts with many others ($I\gg 1$), we have: 
\begin{align}
\phi_{rs}&=\frac{I}{L-1}\frac{2}{\pi}
\arcsin\left(\frac{4\sigma_J^2}{\sigma_s^2+4I\sigma_J^2}\right) \\
\phi_{ss}&=\frac{I}{L-1}\left[1-\frac{2}{\pi}
\left[\arcsin\left(\frac{\sigma_s^2+4(I-2)\sigma_J^2}{\sigma_s^2+4I\sigma_J^2}\right)+2\arcsin\left(\frac{4\sigma_J^2}{\sigma_s^2+4I\sigma_J^2}\right)\right]\right]
\end{align}
while for the case of extremely sparse interactions ($I\ll 1$), we have
\begin{align}
\phi_{rs}&=\frac{I}{L-1}\frac{2}{\pi}
\arcsin\left(\frac{4\sigma_J^2}{\sigma_s^2+4\sigma_J^2}\right) \\
\phi_{ss}&=\frac{I}{L-1}\left[1-\frac{2}{\pi}
\left[\arcsin\left(\frac{\sigma_s^2-4\sigma_J^2}{\sigma_s^2+4\sigma_J^2}\right)+2\arcsin\left(\frac{4\sigma_J^2}{\sigma_s^2+4\sigma_J^2}\right)\right]\right].
\end{align}
For large landscapes, these sparse models show weak local epistasis with $1-\gamma=O(I/L)$. Most interestingly, in models with $I\leq 2$, reciprocal sign epistasis dominates when the interaction terms are strong enough compared to the linear terms, i.e. $4(2-I)\sigma_J^2>\sigma_s^2$ and $4\sigma_J^2>\sigma_s^2$ respectively. (Beware that epistasis in this case is not necessarily weak when measured using global epistasis measures, as expressed for example by roughness/slope statistics.)

Therefore, models with extremely sparse interactions involving allelic incompatibilities provide a counterexample to the trend of SSE dominance that we observed in all other models and in empirical landscapes. 



\subsubsection{Sign epistasis in a computational RNA landscapes}

In the previous section, we presented a theoretical counterexample to the predominance of SSE in weakly epistatic landscapes. Here we present a computational fitness landscape that represents a related, biologically motivated counterexample. 

As we have shown, epistatic blocks tend to generate comparable amounts of RSE and SSE, both at weak ($\gamma\approx 1$) and strong ($0<\gamma< 1$) epistasis, while allelic incompatibilities can generate landscapes dominated by RSE. These interaction patterns are naturally present in the folding of RNA sequences into their secondary structure. In fact, Watson-Crick pairings are typical examples of allelic incompatibilities, while the maintenance of the stem-loop structures induces important local interactions. In this section we perform a statistical analysis of a randomly chosen example of RNA folding landscape, to illustrate how the properties of such a landscape may differ significantly from the empirical landscapes studied before. In particular, SSE does not necessarily dominate over RSE in RNA landscapes.

We consider the secondary structure of the well-studied cricket paralysis virus IRES \cite{schuler2006structure}. Internal ribosome entry sites (IRESs) are RNA elements contributing to translation initiation in RNA viruses; the stability of their secondary and tertiary structure is paramount in ensuring the initiation of translation of capsid proteins, and hence the reproduction of the virus. 

This IRES is 317 bases long. We computed the free energy of all possible 450,774 double mutants from the wild type using the ViennaRNA package \cite{lorenz2011viennarna}. We considered the landscape defined by these free energy values on the genotype space of all the genotypes two mutations away from the wild type, and all mutations among those genotypes. This way, we avoid any bias towards the wild type, since all genotypes have the same Hamming distance from it. Then we computed $\phi_{ss}$ and $\phi_{rs}$ for different sub-landscapes with mutations restricted to a small number of sites. (Recall that neutral mutations do not count towards SSE or RSE.) 

The epistatic interactions in this IRES sequence are quite strong for a real landscape of this size, with a measure of epistasis $1-\gamma^*=0.56$. The amount of sign epistasis among random sites is $\phi_{ss}=20\%(\pm4\%)$ while reciprocal sign epistasis is only slightly lower at $\phi_{rs}=18\%(\pm3\%)$. Interactions have a quite strong local component, as shown in Figure \ref{fig:ires}: the epistasis for mutations a few sites away can be as high as $1-\gamma^*\approx 0.75$ vs $1-\gamma^*\approx 0.56$ for random sites. However, the average strength of epistasis is remarkably high even between distant sites.

Local interactions show a pattern of almost perfect balance between SSE and RSE (similarly to the expected outcome of a block model, or a model with neither too sparse nor too dense allelic incompatibilities). Even epistatic interactions among more distant sites show only a slight excess of SSE over RSE. Overall, these results support the idea that both block structures and allelic incompatibilities are relevant if one wants to understand and model interactions in such an example of RNA landscape.

\begin{figure}
\begin{center}
\includegraphics[width=15cm]{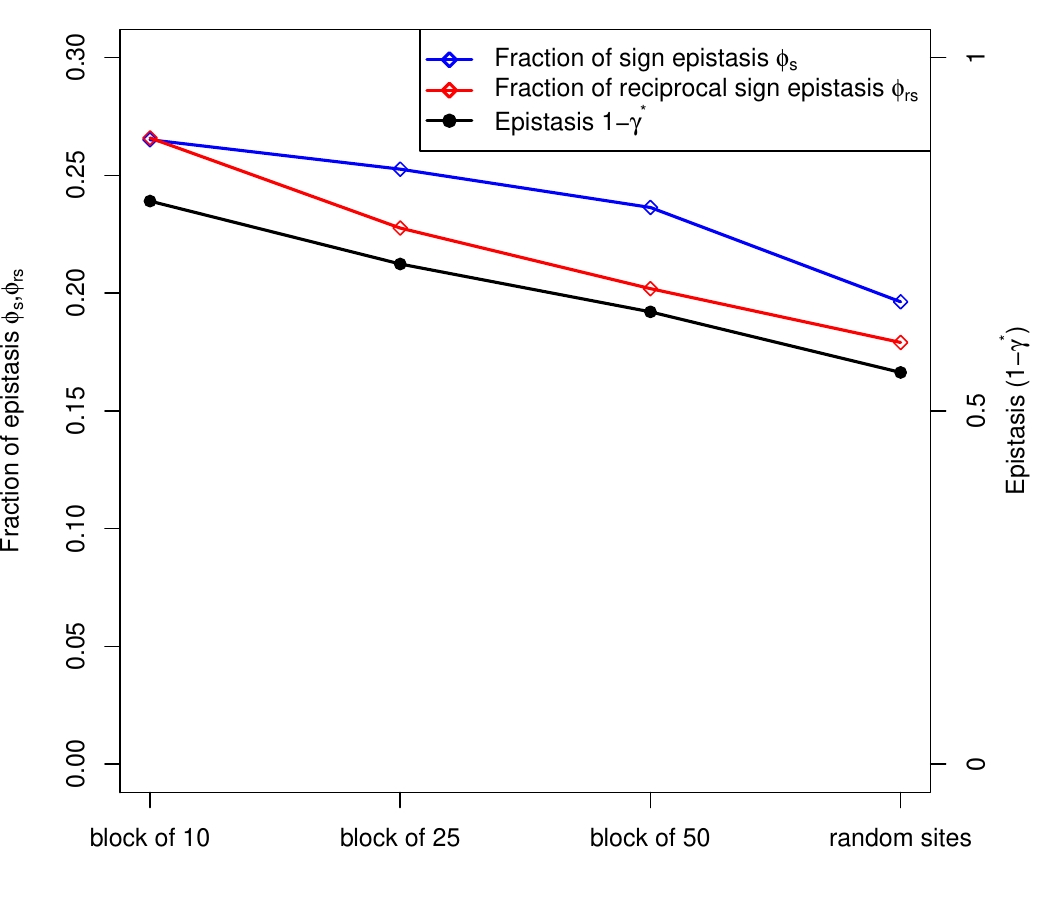}
\caption{\small Statistical analysis of the computational RNA landscape reconstructed by ViennaRNA for the cricket paralysis virus IRES. Local fitness graph epistasis ($1-\gamma^*$) and fractions of simple/reciprocal sign epistasis are averaged across randomly chosen sub-landscapes of this landscape; these sub-landscapes can be built from mutations in blocks of 10, 20 or 50 contiguous sites along the sequence, or from randomly chosen sites along the whole RNA molecule. We limit our analysis to the space of all genotypes two mutations away from the IRES wild type.}\label{fig:ires}
\end{center}
\end{figure}

\section{Discussion}

We showed how simple sign epistasis is associated with evolutionary detours. The importance of these evolutionary detours has been emphasised before both for empirical landscapes \cite{wu2016adaptation} and for theoretical landscapes
\cite{zagorski2016beyond,Kaznatcheev2019}. In fact, models of holey landscapes \cite{Gavrilets2004} has been developed to illustrate how high-dimensional epistatic landscapes can be navigable thanks to evolutionary detours. 
The role of SSE contrasts partly with the role of RSE, which is also correlated to other measures of evolutionary constraints, like the number of local peaks and, more generally, the number of genotypes with few beneficial mutations [Ferretti et al, in preparation] such as evolutionary chains \cite{ferretti2018evolutionary}. As we also showed, it is negatively related to the number of fitness-increasing paths to the absolute fitness maximum. 

The theorems presented here are relevant to understand the properties of real landscapes. Due to the exponentially large size of fitness landscapes, only a small fraction of real landscapes is amenable to experimental studies, hence it is important to find local correlates of global evolutionary properties. The existence and abundance of fitness peaks, evolutionary constraints and detours is difficult to assess experimentally, since it requires to study all mutations in large regions of the landscape. On the other hand, local quantities like the abundance of SSE and RSE can be estimated from data on small sub-landscapes or sparse pairs of mutations. 


We quantified the amount of sign epistasis, SSE, RSE and fitness graph epistasis $1-\gamma^*$ for both landscape models and empirical fitness landscapes. Overall sign epistasis, RSE and $1-\gamma^*$ increase monotonically with the local epistasis measure $1-\gamma$, both in landscape models and across empirical landscapes. SSE is also monotonic for reasonable values of epistasis but, on the other hand, shows non-monotonicity when extreme structured epistasis ($\gamma<0$) appears. This corresponds to the unlikely scenario where the fitness effects of mutations becomes anti-correlated after changes in the genetic background, and it is expected to occur only for some interactions of a few mutations. In fact, we observe it only in some sublandscapes of size 3.

The relation between SSE, RSE and other quantities takes a very simple form in unstructured Gaussian landscape models, i.e. isotropic Gaussian random field models. In all these models, despite their complexity, these quantities depend only on the correlation of fitness effects $\gamma$. For example, in this class of models the relation between the epistasis computed from the fitness graph and $\gamma$ is simply $\gamma^*\approx\frac{2}{\pi}\arcsin\gamma$.

These isotropic Gaussian random field models are particularly interesting because they have been recently used as natural priors for fitness landscape inference \cite{zhou2022higher,zhou2025learning}. Their properties of invariance/isotropy and their Gaussianity makes them natural candidates for the role. Even more interestingly, unstructured models appear also to provide an excellent fit to empirical landscapes in terms of SSE and RSE, further demonstrating how the choice of these landscapes as priors in appropriate for molecular landscapes. This also suggests that typical empirical fitness landscapes may be relatively unstructured in their interactions.

The exceptions are landscapes characterised by a strong block structure and by allelic incompatibilities, which are the only ones where RSE dominates. This result suggests that complex, strongly interacting structures like RNA molecules could  represent a significant obstacle to evolutionary innovations. In fact, single mutations do not seem to be effective in exploring fitness landscapes with 
a large amount of reciprocal sign epistasis. The effect should be stronger in organisms with a RNA genome, such as RNA viruses. Yet, the sequence of RNA viruses shows fast evolutionary rates and their evolution does not appear to be strongly constrainted. A likely answer lies in the high mutation rates of RNA viruses \cite{drake1993rates}. These high rates are often attributed to the error-prone RNA polymerase, i.e. a byproduct of  evolutionary tinkering. However, it is conceivable that the fitness landscape for RNA viruses is structurally different from the one for DNA viruses, and that high mutation rates could actually allow the viruses to produce enough double and multiple mutants to escape the local constraints generated by epistatic interactions. In fact, reciprocal sign epistasis does not represent a significant obstacle to evolution if double mutations are frequent enough to generate a fraction of genotypes that could cross the fitness ``valley" \cite{ghafari2019expected}. Instead of looking at high mutation rates in terms of genetic load only, their role in facilitating the exploration of larger parts of the rugged fitness landscape of RNA viruses should be reconsidered. As often suggested in the past, high mutation rates could be key for evolutionary innovation in viruses. However, it should be noted that there are other possible explanations for the high mutation rate in RNA viruses and more generally in viruses \cite{elena2005adaptive}.

For the first time, we have shed light on the general properties of simple sign epistasis. Our theoretical expectations for the role and frequency of simple sign epistasis and reciprocal sign epistasis are in good agreement with the empirical results from a collection of real-life landscapes. This supports the suggestion that these theoretical insights will generalise to many biologically relevant fitness landscapes. 

\section*{Data availability statement}
All data analysed in this study are derived from publicly available sources. The datasets for all sublandscapes, along with the code used for simulations and for the analysis of the ViennaRNA landscape, are available in the Zenodo repository [ADD].

\section*{Acknowledgments}
We acknowledge support from the European REA, Marie Skłodowska-Curie Actions, grant agreement no. 101131463 (SIMBAD). This work was funded by UK Research and Innovation (UKRI) under the UK government’s Horizon Europe funding guarantee (grant number EP/Y037375/1). MG is supported by a Wellcome Trust Early-Career Award (grant 309205/Z/24/Z). JK acknowledges support by Deutsche Forschungsgemeinschaft (DFG) within CRC 1310.

\printbibliography

\clearpage

\setcounter{page}{1}

\appendix
\setcounter{section}{19}

\section*{Supplementary Information:\\
Simple sign epistasis and evolutionary detours in fitness landscapes
}
\emph{Ribeca} et al.
\subsection{Theorems on fitness graphs}
\subsubsection{Sign epistasis}

\textbf{Theorem:} \emph{The presence of SSE is a necessary and sufficient condition for either of these:\\
(i) the existence of at least a pair of non-neighbour genotypes that are connected by exactly one direct accessible path;
\\
(ii) the existence of at least one reversing path in the landscape.} 

\textbf{Proof:} To show that (i) and (ii) are sufficient conditions, just notice that both are easily verified within the SSE motif itself, as discussed in the Main Text: \\
(i) The genotypes connected by a single direct accessible path are the peak and its antipode, and there cannot be any other direct accessible path since all direct paths between these genotypes (at Hamming distance 2) belong to the same motif. \\
(ii) The reversing path is the path of length three that ends on the peak. \\
We now show that the two conditions are also necessary:
\\
(i) Consider the pair of genotypes connected by a single direct fitness-increasing path, and exchange any two consecutive mutations along this direct path $g\stackrel{\mu}{\rightarrow} g'\stackrel{\mu'}{\rightarrow} g''$ to obtain $g\stackrel{\mu'}{\rightarrow} g'''\stackrel{\mu}{\rightarrow} g''$. We have $f(g)<f(g')<f(g'')$ by construction, while the uniqueness of the path implies that either $f(g''')<f(g)$ or $f(g''')>f(g'')$. Hence the resulting motif formed by the four genotypes $g,g',g'',g'''$ is SSE. 
\\ 
(ii) Denote by $g_s,g_e$ the two genotypes connected by a reversing path, with $f(g_s)<f(g_e)$. Denote by $\mathcal{D}(g_s,g_e)$ the set of genotypes that lie along direct paths from $g_s$ to $g_e$. We can define a distance between a genotype and this set by $d_{g,\mathcal{D}(g_s,g_e)}=min_{x\in \mathcal{D}(g_s,g_e)}d_{x,g}$ where $d_{x,g}$ is the Hamming distance. We can then define the distance between any path $P$ from $g_s$ to $g_e$ and $\mathcal{D}(g_s,g_e)$ by $d_{P,\mathcal{D}(g_s,g_e)}=\sum_{g\in P} d_{g,\mathcal{D}(g_s,g_e)}$. Consider the reversing (fitness-increasing) path $\tilde{P}$ from $g_s$ to $g_e$ that is closest to $\mathcal{D}(g_s,g_e)$. Denote by $g'$ the last farthest genotype from $\mathcal{D}(g_s,g_e)$ on this reversing path $\tilde{P}$ and denote the mutations before and after $g'$ as $g\stackrel{\mu}{\rightarrow} g'\stackrel{\mu'}{\rightarrow} g''$. Exchanging these mutations we obtain $g\stackrel{\mu'}{\rightarrow} g'''\stackrel{\mu}{\rightarrow} g''$. Note that the mutation $\mu'$ reduces the distance from $\mathcal{D}(g_s,g_e)$ (while  $\mu$ does not), otherwise $g'$ would not be the last farthest genotype. Hence, $d_{g''',\mathcal{D}(g_s,g_e)}<d_{g',\mathcal{D}(g_s,g_e)}$. This implies that exchanging the mutations, we obtain a path that is closer to $\mathcal{D}(g_s,g_e)$, hence it cannot be fitness-increasing. All steps are fitness-increasing except possibly $g\stackrel{\mu'}{\rightarrow} g'''$ and $g'''\stackrel{\mu}{\rightarrow} g''$, and the combination of the two is beneficial, hence only one of the two should be deleterious. Since in the other order both $\mu$ and $\mu'$ were beneficial, the motif formed by the four genotypes $g,g',g'',g'''$ is SSE.

\subsubsection{Reciprocal sign epistasis}

\textbf{Theorem:} \emph{If there is at least a genotype that has no direct fitness-increasing path to a peak in the landscape, then there is RSE.}

\textbf{Proof:} Consider a 
direct path from the peak to genotype, chosen in such a way that at every step the fitness of the next genotype is maximised. The first mutation is fitness-decreasing since it starts from a peak, but by hypothesis there should be a fitness-increasing mutation along the path. Consider the first such fitness-increasing mutation $g'\stackrel{\mu'}{\rightarrow} g''$ and the previous one $g\stackrel{\mu}{\rightarrow} g'$ along the path and exchange them to obtain $g\stackrel{\mu'}{\rightarrow} g'''\stackrel{\mu}{\rightarrow} g''$. We have $f(g),f(g'')>f(g')$ by choice of mutations and $f(g')>f(g''')$ by choice of path, hence $f(g),f(g'')>f(g''')$ as well and the four genotypes $g,g',g'',g'''$ form a RSE motif.

Note that it is trivial to use Poelwik's theorem to prove that the reverse is also true, i.e. \emph{if there is RSE, then there is at least a genotype that has no direct fitness-increasing path to a peak in the landscape.} In fact, the second peak is precisely such a genotype. \\
This theorem was independently obtained in \cite{Kaznatcheev2019}.

\subsection{Frequency of types of epistasis: general approach}

Here we present the general approach used for all computations. 
For brevity, let us define 
\[ \Delta_i f(g) \equiv f(g_{[i]}) - f(g) \]
the change in fitness when allele $i$ is mutated in the biallelic case, and a more general expression $\Delta_{i,a} f(g)$ for the multiallelic case specifying the new value $a$ of the allele. 
Our results are based on the following proposition.

\textbf{Theorem 3}: \emph{extracting a random genotype $g$ from a landscape with fitness function $f(g)$ and two random mutations in different loci $i\neq j$, the following relations hold for the probabilities}
\beq
P\left[\Delta_{[i,a]} f(g) > 0,\Delta_{[j,a']} f(g) > 0\right]=\frac{1+\phi_{rs}}{4}\label{th_eq1}
\eeq
\beq
P\left[\Delta_{[i,a]} f(g) > 0,\Delta_{[i,a]} f(g_{[j,a']}) < 0\right]=\frac{\phi_{ss}+2\phi_{rs}}{4}\label{th_eq2}.
\eeq

From this theorem, we can easily obtain the results for the expected values over realisations of the landscape:
\begin{align}
\ev[\phi_{rs}]&=4 P\left[\Delta_{i,a} f(g) > 0,\Delta_{j,a'} f(g) > 0\right]-1 \label{evrs}\\
\ev[\phi_{ss}]&=4 P\left[\Delta_{i,a} f(g) > 0,\Delta_{i,a} f(g_{[j,a']}) < 0\right]-2\ev[\phi_{rs}] = \label{evs}\\
& = 2-8 P\left[\Delta_{i,a} f(g) > 0,\Delta_{j,a'} f(g) > 0\right]+4P\left[\Delta_{i,a} f(g) > 0,\Delta_{i,a} f(g_{[j,a']}) < 0\right] \nonumber
\end{align}
where the probabilities now refer to a double random sampling, i.e. both a random genotype and a random realisation of the landscape.

For Gaussian landscapes, we will use also the following result.

\textbf{Theorem 4}: \emph{for two Gaussian variables $\xi_1,\xi_2$ with $\ev[\xi_1]=\ev[\xi_2]=0$, $\var[\xi_1]=\var[\xi_2]=\sigma^2$ and $\cov[\xi_1,\xi_2]=c$, we have}
\beq
P[\xi_1>0,\xi_2>0]=\frac{1}{4}+\frac{1}{2\pi}\arcsin\left(\frac{c}{\sigma^2}\right).
\eeq

This expression depends only on the correlation between the Gaussian variables. This theorem will be used to compute the probabilities in equations (\ref{evrs}),(\ref{evs}) by choosing the right set of Gaussian variables. More in detail, the variables $\xi_1,\xi_2$ will correspond to $f(g_{[i]})-f(g),f(g_{[j]})-f(g)$ when computing equation (\ref{evrs}) and to $f(g_{[i]})-f(g),f(g_{[j]})-f(g_{[ij]})$ when computing equation (\ref{evs}).

Finally, a simple theorem that is useful in several computations related to Theorem 4, as well as in computing $\gamma$ for many landscapes models presented here.

\textbf{Theorem 5}:
\emph{for any landscape, the following relation holds between covariances:}
\begin{align}
\cov\left[\Delta_{[i,a]} f(g) ,\Delta_{[i,a]} f(g_{[j,a']}) \right]&=\var\left[\Delta_{[i,a]} f(g) \right]-2\cov\left[\Delta_{[i,a]} f(g) ,\Delta_{[i,a]} f(g_{[j,a']}) \right]
\end{align}
\emph{and therefore}
\begin{align}
\gamma&=\frac{\cov\left[\Delta_{[i,a]} f(g) ,\Delta_{[i,a]} f(g_{[j,a']}) \right]}{\var\left[\Delta_{[i,a]} f(g) \right]}=1-2\frac{\cov\left[\Delta_{[i,a]} f(g) ,\Delta_{[i,a]} f(g_{[j,a']}) \right]}{\var\left[\Delta_{[i,a]} f(g) \right]}.
\end{align}

\subsubsection{Proofs}
\textbf{Theorem 3}: this is a purely combinatorial theorem. The random extraction of a random genotype and two random mutations $i,j$ at different sites is equivalent to the random extraction of a motif from all motifs in the landscape, and then a random vertex from that motif. Therefore, the probabilities can be computed for each of the motifs ME, SSE, RSE, and then weighted by their relative abundance $1-\phi_{ss}-\phi_{rs},\phi_{ss},\phi_{rs}$. The probability that a random vertex is a minimum within the motif is $1/4$ for ME and SSE, and 1/2 for RSE. Weighting by the abundance of different motifs, we obtain the result (\ref{th_eq1}). Similarly, the probability that the fitness effect of a random mutation in the motif changes sign is $0$ for ME, $1/2$ for SSE and $1$ for RSE; when we also require that the mutation is initially beneficial, these probabilities are halved to $0$ for ME, $1/4$ for SSE and $1/2$ for RSE. We obtain (\ref{th_eq2}) by weighting by the abundance of different motifs.\\
\textbf{Theorem 4}: the probability is equivalent to the integral of the joint distribution $\xi_1,\xi_2$ over the positive quadrant of $\mathbb{R}^2$, i.e. the subspace defined by the condition $(\xi_1>0,\xi_2>0)$, or equivalently the slice between the two vectors $(1,0),(0,1)$. After a change of variables to $x_\pm=\frac{\xi_1\pm\xi_2}{\sqrt{2(1+c/\sigma^2)}}$, these new variables have covariances given by $\var(x_+)=\var(x_-)=1$ and $\cov(x_+,x_-)=0$. In the $(x_+,x_-)$ plane, the slice is still centered at $(0,0)$ but now defined as the slice between the two transformed vectors
\[
  \left(\frac{1}{\sqrt{2(1+c/\sigma^2)}},\frac{1}{\sqrt{2(1-c/\sigma^2)}}\right)\;
  \left(\frac{1}{\sqrt{2(1+c/\sigma^2)}},-\frac{1}{\sqrt{2(1-c/\sigma^2)}}\right).
\]
The distribution is still Gaussian, but now spherically symmetric in this space, therefore the integral reduces to the fraction of the total angle $2\pi$ covered by the slice. The cosine of this angle is the scalar product between the vectors divided by their length, which results in $-c/\sigma^2$, therefore the angle is $\theta=\arccos(-c/\sigma^2)$. Since $\cos(\theta)=\sin(\pi/2-\theta)$, the fraction of the total angle can be written as
\begin{align}
P[\xi_1>0,\xi_2>0]=\frac{\arccos(-c/\sigma^2)}{2\pi}=\frac{1}{4}+\frac{1}{2\pi}\arcsin\left(\frac{c}{\sigma^2}\right).
\end{align}

\subsection{Frequency of types of epistasis: specific models}

We assume that a genotype is given by a vector $g=[x_1,x_2,\ldots x_L]$ of alleles $x_i \in \{\pm1\}$ (in the biallelic case) or $x_i \in \{a_1\ldots a_A\}$ (in the multiallelic case). The fitness function $f(g)$ is a real function defined over the space of all $A^L$ different genotypes. In the biallelic case, we use the notation $g_{[i]} = [x_1,x_2,\ldots,-x_i,\ldots, x_L]$ to denote the genotype that differs from $g$ only in the allele $x_i$ at locus $i$. We also denote the change in fitness $\Delta_i f(g)=f(g_{[i]})-f(g)$.

\subsubsection{Unstructured Gaussian (isotropic random field) models}
This family of models is characterised by \(f(g) \sim N(\bar{f},\Sigma^2)\) where $\bar{f}$ is a constant and the covariance matrix $\Sigma^2$ has the following symmetries: invariance under arbitrary permutations of loci
\[\mathrm{Cov}[f(g),f(g')]=\mathrm{Cov}[f(g_{x_i\leftrightarrow x_j}),f(g'_{x_i\leftrightarrow x_j})]\]
and invariance under arbitrary permutations of allele at a given locus
\[\mathrm{Cov}[f(g),f(g')]=\mathrm{Cov}[f(g_{x_i=(a\leftrightarrow a')}),f(g'_{x_i=(a\leftrightarrow a')})]\]

These properties make it simple to compute the probabilities using Theorem 4. In fact, $\Delta_{i,a} f(g) = f(g_{[i,a]})-f(g), \Delta_{j,a'} f(g) =f(g_{[j,a']})-f(g)$ is a pair of centered Gaussian random variables with variance equal to $\mathrm{E}[(\Delta f)^2]=2\mathrm{Var}[f](1-\rho_1)$ and covariance $\mathrm{Var}[f](\rho_2-2\rho_1+1)$ where $\rho_d$ are the fitness correlation functions at distance $d$ for this landscape. Similarly, $\Delta_{i,a} f(g),\Delta_{i,a} f(g_{[j,a']})$ is a pair of centered Gaussian random variables with variance equal to $2\mathrm{Var}[f](1-\rho_1)$ and covariance $2\mathrm{Var}[f](\rho_1-\rho_2)$. Since the correlation of fitness effects $\gamma$ is defined as $\gamma=\frac{\rho_1-\rho_2}{1-\rho_1}$, we can rewrite the correlations
\[
\mathrm{Cor}[\Delta_{i,a} f(g) ,\Delta_{i,a} f(g_{[j,a']})]=\gamma
\]
\[
\mathrm{Cor}[\Delta_{i,a} f(g) ,\Delta_{j,a'} f(g) ]=\frac{1-\gamma}{2}
\]
and use Theorem 4 and then Theorem 3 to obtain the final results for SSE and RSE.

\subsubsection{Rough Mount Fuji model}

\textbf{Classical version:} In this biallelic landscape, each locus contributes with a fixed ``slope'' $s>0$, and the roughness is given by a random House-of-Cards contribution:
\[
f(g) = \frac{s}{2} \sum_i x_i + \eta(g)
\]
where $\eta(g)$ are i.i.d. random variables extracted for each genotype. Here we consider the choice of random uniform variables \[\eta \sim U[0,\sigma].\]
This is the ``classical'' version because the additive effects are deterministic constants, not drawn from a distribution.

To derive the required probabilities in this model, first we scale all fitness values by $\sigma$ so that $\eta$ is $U[0,1]$-distributed and the only parameter of the model is the ratio $s'=s/\sigma$. Then, we have to solve the integrals
\begin{align*}
P[\Delta_i f(g) >0,\Delta_i f(g_{[j]})<0] & = \int_0^1 dx \int_0^1 dy \int_0^1 dz \int_0^1 dw\ \theta(x-z-s')\theta(y-w+s')\\
P[\Delta_i f(g)>0,\Delta_i f(g_{[j]})>0] & = \frac{1}{4}\sum_{\chi,\xi\in\{\pm 1\}}\int_0^1 dx \int_0^1 dy \int_0^1 dz\ \theta(x-z+\chi s')\theta(y-z+\xi s')
\end{align*}


The first integral factorises as 
\begin{align*}
P[\Delta_i f(g) >0,\Delta_i f(g_{[j]})<0] &=
\int_0^1 dx \int_0^1 dy\ \theta(x-y-s') \left[1-\int_0^1 dx \int_0^1 dy\ \theta(x-y-s')\right] \\
&= \frac{(1-s')^2}{2}\left(1-\frac{(1-s')^2}{2}\right)\theta(1-s')
\end{align*}
where $\theta(x)$ is the Heaviside step function. The second equation contains three different integrals, which evaluate to 
\begin{align*}
\int_0^1 dx \int_0^1 dy \int_0^1 dz\ \theta(x-z- s')\theta(y-z- s') & = 1+\left(s'-1+\frac{1-s'^3}{3}\right)\theta(1-s')\\
\int_0^1 dx \int_0^1 dy \int_0^1 dz\ \theta(x-z+ s')\theta(y-z+ s') & = \frac{(1-s')^3}{3} \theta(1-s')\\
\int_0^1 dx \int_0^1 dy \int_0^1 dz\ \theta(x-z- s')\theta(y-z+ s') & = \theta(1-s')\left[\frac{(1-s')^2}{2} + \frac{(1-2s')^3}{6}\theta(1/2-s') \right].
\end{align*}
Using Theorem 3, we finally find the solution
\begin{align}
\phi_{rs}&=\begin{cases} 0 & \mathrm{for}\ \frac{s}{\sigma}\geq {1}\\
\frac{2}{3}\left(1-\frac{s}{\sigma}\right)^3
& \mathrm{for}\ 1>\frac{s}{\sigma}>\frac{1}{2}\\ 
\frac{2}{3}\left(1-\frac{s}{\sigma}\right)^3-\frac{1}{3}\left(1-2\frac{s}{\sigma}\right)^3
& \mathrm{for}\ \frac{s}{\sigma}\leq \frac{1}{2}\end{cases}\\
\phi_{ss}&=\begin{cases} 0 & \mathrm{for}\ \frac{s}{\sigma}\geq {1}\\
1- \left(\frac{s}{\sigma}\right)^2\left(2-\frac{s}{\sigma}\right)^2
-\frac{4}{3}\left(1-\frac{s}{\sigma}\right)^3
& \mathrm{for}\ 1>\frac{s}{\sigma}>\frac{1}{2}\\ 
1- \left(\frac{s}{\sigma}\right)^2\left(2-\frac{s}{\sigma}\right)^2
-\frac{4}{3}\left(1-\frac{s}{\sigma}\right)^3+\frac{2}{3}\left(1-2\frac{s}{\sigma}\right)^3
& \mathrm{for}\ \frac{s}{\sigma}\leq \frac{1}{2}\end{cases}
\end{align}
or equivalently, the one reported in the Main Text.

{\bf Unstructured version:}
It is a variation of the classical model, but allowing for a random orientation of the peak:
\[
f(g) = \sum_{i=1}^L \frac{a_i}{2} x_i + \eta(g)
\]
where  \(a_i\) are random variables (that do not depend on the genotype), while $\eta$ are i.i.d. random variables extracted for each genotype. We consider the Gaussian case where the slope vector and noise are both Gaussians, with parameters
\[ \vec a\: :\: a_i  \sim N(\mu_a,\sigma_a^2) \quad \eta  \sim N(0,\sigma_{HoC}^2).\]
The covariance structure for this model would be:
\[
\operatorname{Var}[\Delta_i f] = a_i^2 + 2\sigma_{\text{HoC}}^2, \quad \operatorname{Cov}[\Delta_i f, \Delta_j f] = \sigma_{\text{HoC}}^2
\]
with a third covariance involving the specific \(a\) values. 
Averaging over realisations, we obtain
\begin{align*}
\var\left[\Delta_i f(g) \right]=&\sigma_s^2+2\sigma_{HoC}^2\\
\cov\left[\Delta_i f(g) ,\Delta_j f(g) \right]=&\sigma_{HoC}^2\\
\cov\left[\Delta_i f(g) ,\Delta_i f(g_{[j]}) \right]=&\sigma_s^2.
\end{align*}
The typical correlation in fitness effects is given by \cite{Ferretti2016}:
\[ E[\gamma] = 1 - \frac{ 2 \sigma^2_{HoC} }{ \mu^2_a + \sigma^2_a + 2 \sigma^2_{HoC}}. \]
The unstructured case in the Main Text corresponds to $\mu_a=0$, since in this case the model is an unstructured Gaussian model, with epistasis 
\[ 1-\gamma =  \frac{ 2\sigma^2_{HoC} }{ \sigma^2_a + 2 \sigma^2_{HoC}}. \]

\subsubsection{House of Cards model}

\textbf{House of Cards (HoC)} It can be seen as a particular case of the Rough Mount Fuji model, where there is actually no general slope $s=0$, or $\vec a \equiv 0$, and all that remains is the rough random landscape (equivalent to a Random Energy Model):
\[
f(g) = \eta(g), \quad \eta(g) \text{ i.i.d.}
\]
This gives \(\operatorname{Var}[\Delta_i f] = 2\sigma_{\text{HoC}}^2\) and \(\operatorname{Cov}[\Delta_i f, \Delta_j f] = \sigma_{\text{HoC}}^2\) because mutations share the target genotype's random component.

\subsubsection{NK models}
\textbf{Mean-field NK model:} It is the mean-field version of the random NK model. In the random NK model, each locus interacts with \(K\) randomly chosen other loci.
\[
f(g) = \sum_{i=1}^L f_i(x_i, x_{i_1}, \dots, x_{i_{K}})
\]
where each \(f_i\) is a random table of size \(2^{K+1}\). The ``random'' refers to the random, non-geometric choice of interacting partners. 

In the mean-field model, all possible groups of loci of size $K+1$ interact: 
\[
f(g) \propto \sum_{\mathbf{J}\in \mathcal{S}_{K+1}} f_\mathbf{J}(x_{J_1}, \dots, x_{J_{K+1}})
\]
where $\mathcal{S}_{K+1}$ is the space of all subsets of loci of size $K+1$.

Here we consider all random variables in this model to be Gaussian-distributed with variance $\sigma_{NK}^2$. The correlations are (up to an irrelevant multiplicative constant):
\begin{align}
\var\left[\Delta_i f(g) \right]=&2(K+1)\sigma_{NK}^2\\
\cov\left[\Delta_i f(g) ,\Delta_j f(g) \right]=&\frac{K(K+1)}{L-1}\sigma_{NK}^2\\
\cov\left[\Delta_i f(g) ,\Delta_i f(g_{[j]}) \right]=&2(K+1)\left(1-\frac{K}{L-1}\right)\sigma_{NK}^2.
\end{align}

\textbf{NK model with adjacent interactions:} Loci arranged on a circle or line. Each locus \(i\) interacts with itself and the nearest \(K\) loci:
\[
f(g) = \sum_{i=1}^L f_i(g_i, g_{i+1}, \dots, g_{i+K})
\]
where indices mod \(L\). ``Adjacent'' means nearest-neighbor on the genome, not random.

In this case, correlations depend on distance $d$ between different loci. This dependence is explicitly related to the overlap of interaction neighborhoods between two loci $i$ and $j$. The maximum overlap will be $K+1$ and the term \(\min(d, K+1, L-K-1)\) encapsulates all possible configurations of $i$ and $j$ in the topology. 

Once more, assuming that all random variables are considered to be Gaussian-distributed with variance $\sigma_{NK}^2$, we obtain 
\begin{align}
\var\left[\Delta_i f(g) \right]=&2(K+1)\sigma_{NK}^2\\
\cov\left[\Delta_i f(g) ,\Delta_j f(g)\right]=&(K+1-\min(d,K+1,L-K-1))\sigma_{NK}^2\\
\cov\left[\Delta_i f(g) ,\Delta_i f(g_{[j]})\right]=&2\min(d,K+1,L-K-1)\sigma_{NK}^2.
\end{align}

\subsubsection{Block model}

This is also a NK model. The entire genotype is partitioned in $B$ blocks $g = [g_1,\ldots,g_{B}] = [\vec x_1, \vec x_2,\ldots, \vec x_{B}]$, each block containing $K+1=L/B$ alleles. Each block is independent (no cross-block interaction), therefore
\[ f(g) = \sum_{b\in 1}^B f_b(g_b)\]
where each \(f_b(g_b = \vec x_b)\) is a House of Cards. 

Since the chance that a mutations interact with another is $K/(L-1)$, the same factor rescales all epistatic statistics, i.e. 
\[\phi_{ss}=\frac{K}{L-1}\phi_{ss}^{(HoC)}=\frac{1/B-1/L}{1-1/L}\frac{1}{3}\]
\[\phi_{rs}=\frac{K}{L-1}\phi_{rs}^{(HoC)}=\frac{1/B-1/L}{1-1/L}\frac{1}{3}\]
\[1-\gamma=\frac{K}{L-1}(1=\gamma^{(HoC)})=\frac{1/B-1/L}{1-1/L}.\]

\subsubsection{Pairwise allelic incompatibilities model}

Pairwise models are those more closely related to a spin-glass / Ising model. For these biallelic models the genotypes are $g=(S_1\ldots S_L)$ with $S_i=\pm 1$, and the fitness function is similar to standard models in statistical mechanics:
\[
f(g) = \frac{1}{2}\sum_i s_i S_i +\sum_{i<j} J_{ij} S_i S_j
\]
where \(s_i\) are random fields, and \(J_{ij}\) are i.i.d. random couplings with mean 0 and variance \(\sigma_J^2\), assumed Gaussian here. ``Dense'' means that many of the \(\binom{L}{2}\) possible interactions are present, while in the ``sparse'' version, the only nonzero interactions $J_{ij}\neq0$ belong to a given sparse graph $(i,j) \in G$ with degree $I$.

The model with dense, random interactions can be approximated by a ``mean-field'' unstructured model with all interactions (i.e. $I=L-1$):
\begin{align}
\var\left[\Delta_i f(g) \right]=&\sigma_s^2+4(L-1)\sigma_J^2\\
\cov\left[\Delta_i f(g) ,\Delta_j f(g) \right]=&4\sigma_J^2\\
\cov\left[\Delta_i f(g) ,\Delta_i f(g_{[j]}) \right]=&\sigma_s^2+4(L-3)\sigma_J^2.
\end{align}

Instead, in the sparse limit $I\ll L$, the SSE and RSE for models with sparse, structured interactions can be reconstructed by noticing that the probabilities (\ref{th_eq1}),(\ref{th_eq2}) take the trivial values 1/4 and 0 respectively for a fraction $(L-1-I)/(L-1)$ of all possible pairs of mutations, and they take non-trivial values only for a fraction $I/(L-1)$ of pairs of mutations. For these interacting pairs of mutations, the variables that appear in these probabilities are distributed as multivariate Gaussians, and therefore Theorem 4 can be applied using the following covariances if $I\geq 1$:
\begin{align}
\var\left[\Delta_i f(g) \right]=&\sigma_s^2+4I\sigma_J^2\\
\cov\left[\Delta_i f(g) ,\Delta_j f(g) \right]=&4\sigma_J^2\\
\cov\left[\Delta_i f(g) ,\Delta_i f(g_{[j]}) \right] = &\sigma_s^2+4(I-2)\sigma_J^2
\end{align}
or these covariances for the extremely sparse case $I\ll 1$, which are similarly derived by conditioning on the existence of an interaction between these mutations:
\begin{align}
\var\left[\Delta_i f(g) \right]=&\sigma_s^2+4\sigma_J^2\\
\cov\left[\Delta_i f(g) ,\Delta_j f(g) \right]=&4\sigma_J^2\\
\cov\left[\Delta_i f(g) ,\Delta_i f(g_{[j]}) \right] = &\sigma_s^2-4\sigma_J^2.
\end{align}

\end{document}